\documentclass[10pt]{article}

\usepackage[T1]{fontenc}
\usepackage[letterpaper,top=0.72in,bottom=0.78in,left=0.84in,right=0.84in,
  headheight=19pt,headsep=0.18in]{geometry}
\usepackage{amsmath,amssymb,amsthm}
\usepackage{newtxtext,newtxmath}
\usepackage{booktabs}
\usepackage{tabularx}
\usepackage{float}
\usepackage{graphicx}
\usepackage[numbers,sort&compress]{natbib}
\usepackage{xcolor}
\usepackage{microtype}
\usepackage{enumitem}
\usepackage[font=small,labelfont=bf,labelsep=period]{caption}
\usepackage{titlesec}
\usepackage{fancyhdr}
\usepackage{tikz}
\usepackage{hyperref}
\usepackage{cleveref}
\newcolumntype{Y}{>{\raggedright\arraybackslash}X}
\newcolumntype{P}[1]{>{\raggedright\arraybackslash}p{#1}}

\definecolor{paperink}{HTML}{17212B}
\definecolor{paperaccent}{HTML}{315C6D}
\definecolor{paperrule}{HTML}{CCD6DC}
\definecolor{orcidgreen}{RGB}{166,206,57}

\hypersetup{
  colorlinks=true,
  linkcolor=paperaccent,
  citecolor=paperaccent,
  urlcolor=paperaccent,
  pdftitle={Who Does Withholding Delay?},
  pdfauthor={Daniel Commey},
  pdfsubject={Open-weight AI release under asymmetric proliferation}
}

\titleformat{\section}
  {\large\sffamily\bfseries\color{paperink}}{\thesection}{0.65em}{}
\titleformat{\subsection}
  {\normalsize\sffamily\bfseries\color{paperink}}{\thesubsection}{0.65em}{}
\titleformat{\subsubsection}
  {\normalsize\sffamily\bfseries\color{paperink}}{\thesubsubsection}{0.65em}{}
\titlespacing*{\section}{0pt}{1.65em}{0.65em}
\titlespacing*{\subsection}{0pt}{1.25em}{0.45em}
\titlespacing*{\subsubsection}{0pt}{1em}{0.35em}

\setlist{nosep,leftmargin=1.7em}
\renewcommand{\headrulewidth}{0.35pt}
\renewcommand{\headrule}{\hbox to\headwidth{\color{paperrule}\leaders\hrule height \headrulewidth\hfill}}
\fancypagestyle{firstpage}{%
  \fancyhf{}
  \fancyfoot[C]{\sffamily\footnotesize\thepage}
  \renewcommand{\headrulewidth}{0pt}
}
\fancypagestyle{plain}{%
  \fancyhf{}
  \fancyhead[L]{\sffamily\footnotesize Who Does Withholding Delay?}
  \fancyhead[R]{\sffamily\footnotesize Daniel Commey}
  \fancyfoot[C]{\sffamily\footnotesize\thepage}
  \renewcommand{\headrulewidth}{0.35pt}
}

\newcommand{\orcidicon}[1]{\href{https://orcid.org/#1}{%
  \begin{tikzpicture}[baseline=-0.66ex]
    \fill[orcidgreen] (0,0) circle (0.62em);
    \node at (0,0) {\color{white}\fontsize{6}{6}\selectfont\textsf{\textbf{iD}}};
  \end{tikzpicture}}}

\renewenvironment{abstract}{%
  \begin{center}\begin{minipage}{0.92\textwidth}
  \small\noindent{\sffamily\bfseries\color{paperink}Abstract}\par\vspace{0.35em}
}{%
  \end{minipage}\end{center}\vspace{0.45em}
}

\newtheorem{proposition}{Proposition}
\newtheorem{corollary}{Corollary}

\theoremstyle{definition}

\theoremstyle{remark}

\newcommand{\E}{\mathbb{E}}

\newcommand{\one}{\mathbf{1}}
\newcommand{\policyC}{\mathsf{C}}
\newcommand{\policyP}{\mathsf{P}}
\newcommand{\policyG}{\mathsf{O_g}}
\newcommand{\policyO}{\mathsf{O_0}}
\newcommand{\dd}{\,\mathrm{d}}

\title{Who Does Withholding Delay?}
\newcommand{\papersubtitle}{A Game-Theoretic Model of Open-Weight AI Release\\
Under Asymmetric Proliferation}
\author{Daniel~Commey~\orcidicon{0000-0001-5759-918X}}
\date{}

\makeatletter
\renewcommand{\maketitle}{%
  \begin{center}
    {\fontsize{21}{24}\selectfont\sffamily\bfseries\color{paperink}\@title\par}
    \vspace{0.45em}
    {\fontsize{12.5}{15}\selectfont\sffamily\color{paperaccent}\papersubtitle\par}
    \vspace{1.05em}
    {\normalsize\sffamily\bfseries\@author\par}
    \vspace{0.95em}
    {\color{paperrule}\rule{0.96\textwidth}{0.7pt}}
  \end{center}
  \vspace{0.35em}
}
\makeatother

\begin{document}
\maketitle
\thispagestyle{firstpage}

\begin{abstract}
Restricting access to a dual-use AI model is precautionary only if it delays
harmful actors more than defenders. That condition is actor-specific: a state
agency or organized criminal group may obtain a substitute through theft,
distillation, intermediated access, independent development, or a foreign
release, while a small utility or open-source maintainer may have no comparable
route. We model a laboratory choosing among controlled access, a defender-first
window, safeguarded open weights, and minimally restricted open weights.

The model isolates two effects. Under \emph{access inversion}, restriction
gives an access advantage to adversaries that obtain effective substitutes
faster than defenders. Holding usefulness fixed, \emph{asymmetric empowerment}
means that immediate release adds the most capability to populations least
likely to possess a substitute; with unequal usefulness, both terms determine
the ordering. Neither effect makes openness optimal by itself. The ranking also depends on
opportunistic misuse, offense--defense conversion, defensive spillovers,
safeguard friction, and nonrecallable losses.

When its endpoint conditions hold, a linear benchmark yields a unique
adversary-substitution threshold above which broad release overtakes control. A defender-first window is valuable only when
selected defenders deploy protection before adversaries catch up, and removable
safeguards remain useful when they deter enough opportunistic misuse. A
nonlinear implementation produces a nonempty policy region for each release
tier. Three nested 2,048-point deterministic designs show how policy shares
change with the parameter bounds, while a separate grid examines actor-specific
deployment delays after release.

Recent release, cyber-evaluation, and incident-response cases illustrate the
quantities a release review would need to measure. The operational implication
is to estimate actor-specific substitution
times, marginal capability gains, deployment rates, defensive reach, newly
enabled misuse, and nonrecallable losses before treating restriction as
precautionary.
\end{abstract}

\section{Introduction}

Many arguments for withholding a powerful AI model rely on delay: restriction
is intended to slow harmful use and give defenders time to prepare. The relevant
channels need not be symmetric. A state agency, organized criminal group, or
well-funded competitor may be able to steal weights, distill a hosted model,
purchase intermediated API access, or use a comparable foreign release.
Resource-constrained defenders may lack those routes. In that setting,
restriction can delay defenders more than the sophisticated actors it is meant
to constrain.

Before asking whether a model should be open, ask \emph{which actors
does withholding actually delay?} Under a restriction, sophisticated
adversaries obtain an adequate substitute at random time $T_S$;
distributed defenders reach one at $T_D$. The defender-head-start rationale
assumes $T_D<T_S$. When the ordering reverses, so that $T_S<T_D$, restriction
can still be justified on other grounds, but not on delay; we call this
pattern \emph{asymmetric proliferation}.

Release does two things at once. It shrinks the capability gap
between well-resourced and constrained actors, and it raises the
total capability in circulation. Neither statistic on its own
settles welfare. Only an actor-by-actor comparison can do so, because welfare
turns on which populations gain the most new capability and how they use it.

The model compares four stylized policies. Controlled access can preserve
provider monitoring and withdrawal, but it may centralize access, expose prompts
or outputs to the provider, or exclude users through price or eligibility. A defender-first window
gives selected recipients an early copy before broader release, and
pays off only if those recipients ship protection before the window
closes. Safeguarded open weights let a sophisticated user strip the
defaults, but the defaults can still deter opportunistic misuse.
Minimally restricted open weights remove provider recall and reduce default
friction for both legitimate and malicious users.

Four analytic results and a numerical implementation follow.
Proposition~\ref{prop:inversion} gives a benchmark expression for the
access imbalance created by restriction.
Proposition~\ref{prop:empowerment} does the same for finite-horizon
marginal capability. Proposition~\ref{prop:threshold} identifies the
adversary-substitution rate above which broad release beats control
in a linear benchmark, and Proposition~\ref{prop:window} pins down
when a defender-first window has any chance of paying off. The
numerical model relaxes linearity, compares all four policies, and
stress-tests the ranking over three nested parameter boxes. A short
evidence dataset records observable release timing alongside the
quantities that remain out of reach.

\begin{quote}
\small
\textbf{Practitioner decision rule.} Before asking whether a model
is dangerous, ask which populations withholding delays. For each population that
matters, estimate the capability release would add over a decision
horizon and the time each actor needs to find a substitute without
you. Then ask three follow-ups: can selected defenders deploy
protection before substitutes arrive, how many new opportunistic
users would low-friction release create, and which losses would be
unrecoverable once weights are copied. Control is welfare-improving only
when it materially delays high-harm actors. A defender-first window is
welfare-improving only when defenders can ship before adversaries
catch up. Safeguarded release becomes reasonable once substitutes
have already eroded the control advantage and defaults still deter
enough misuse. Minimally restricted release requires the stronger
finding that safeguard friction on legitimate work exceeds whatever
deterrence the defaults provide.
\end{quote}

The closest technical antecedent is \emph{The Oracle's Gambit}
\cite{landolt2026oracle}, in which a laboratory sequences access in a
defender--adversary security game. Landolt et al.\ study the value of
defender-first sequencing in a bilevel Stackelberg structure. We additionally
make that value depend on adversary substitution, defender coverage, and
deployment.

Readers who care most about operational decisions can jump from the
practitioner rule to the industry cases in \cref{sec:debate} and the
decision framework in \cref{sec:implications}. The model and results appear in
\cref{sec:model,sec:results,sec:computation}, with proofs and
boundary cases attached to the claims they establish.

\section{Related work}

Release runs along a spectrum from staged access through hosted services
and downloadable weights to fully open systems
\cite{solaiman2023gradient}. GPT-2's staged rollout is the canonical
example of using time as a policy instrument \cite{solaiman2019release}.
The broader foundation-model literature catalogs benefits from
oversight, innovation, and decentralized deployment alongside misuse and
loss-of-control risks \cite{seger2023opensource}. Kapoor et al.\
\cite{kapoor2024societal} argue that the risk of an open model has to
be measured against its substitutes and identify customizability and
weak monitoring as the two features that generate both the benefits and
the harms. This paper formalizes one piece of that marginal risk:
delay measured per actor.

Two industrial-organization treatments neighbor ours. Xu et al.\
\cite{xu2025economics} view openness as a choice inside an AI value
chain, and Gomes \cite{gomes2026paradox} examines dependency and
displaced risk under concentrated compute. Neither models downstream
security after the model exists.

Two game-theoretic treatments are closer. Mladenovic, Courville, and
Gidel \cite{mladenovic2026why} model openness inside an R\&D race:
their players are still building the model when they choose. Ours has
already built it. Landolt et al.\ \cite{landolt2026oracle} study
release timing in a bilevel Stackelberg game where the defender-first
sequence is central. Our model adds independent adversary substitution,
defender coverage, and deployment as determinants of what the sequence is
worth.

The choice of hosted versus downloadable also shapes what can be
enforced. Provider-hosted access supports provider-side monitoring,
termination, and withdrawal, although it may centralize enforcement and expose
prompts or outputs to the provider. Downloadable weights support local audit and
adaptation; their safeguards can be modified directly. Hosted safeguards remain
provider-controlled but may be probed or bypassed
\cite{aisi2025risk,wallace2025worstcase}. Removability does not make defaults
useless when they still deter casual misuse. The Model Openness Framework reminds us that ``open'' has more
dimensions than parameters alone \cite{white2024mof}. We use \emph{open
weights} narrowly, for broadly downloadable parameters, and hold the
release of data and training code fixed.

\section{The live policy debate and recent asymmetries}
\label{sec:debate}

Public arguments about release usually skip the empirical questions
that would decide them. Do current access controls actually bind the
actors we most worry about? Does capability level predict harm better
than release format? Does broad participation produce security work
that a single laboratory cannot reproduce in-house? Does local
deployment cut privacy, cost, and dependency risks enough to justify
losing monitoring and the ability to recall? The paragraphs below map the
leading positions to the inputs a formal model actually needs.

\subsection{Capability-triggered control}

Dario Amodei and Anthropic argue for regulation whose demands rise
with capability. Anthropic's Responsible Scaling Policy is
proportional and iterative: crossing a defined capability threshold
triggers stronger evaluations, deployment safeguards, and
security requirements on the model itself
\cite{anthropic2024targeted}. Mapped onto our model, this becomes a
rising irreversibility term $I_r$ and tighter control when a model's
offensive conversion productivity is high. It also
treats laboratory security as a policy lever rather than a fixed
background condition.

Sam Altman's 2023 Senate testimony backed a similar structure:
internal and external testing, licensing or registration for the
most capable models, and a warning that any of this would need
international coordination to hold \cite{openai2023testimony}. The
capability-threshold view is coherent but incomplete on its own.
Two equally capable models can deserve different release policies
if substitute access, defensive reach, or deployment speed differs
between them.

\subsection{Open ecosystems and hybrid portfolios}

Mark Zuckerberg's case for Meta's Llama ecosystem is commercial as
much as safety-driven. Enterprises want customization, want to avoid
vendor lock-in, want their sensitive data to stay local, and want a
lower bill. Meta reports that a developer can run Llama~3.1~405B on
its own infrastructure at roughly half the inference cost of GPT-4o,
while acknowledging that a widely adopted model ecosystem also
benefits Meta \cite{zuckerberg2024opensource}. Mapped onto our
model, this corresponds to higher access benefits $b_r$, lower
effective defensive friction $f_r$, and possibly larger reach $n_r$.
The security half of the argument still requires evidence. Additional
nominal users are not additional deployed protection.

OpenAI runs a hybrid portfolio: hosted frontier systems paired with
selective weight releases. Its 2025 gpt-oss release ships under a
permissive license, runs the larger variant on a single 80-GB GPU and
the smaller in 16 GB of memory, and passed adversarial fine-tuning
evaluations before publication \cite{openai2025gptoss}. From an
enterprise perspective the trade is now concrete. Monitored API access
buys you updates and provider-side controls. Open weights buy you
local data residency, deep customization, predictable availability,
and freedom from a vendor's next unilateral change.

Throughout the paper \emph{open-weight} means broadly downloadable
parameters. Llama, DeepSeek-R1, and gpt-oss differ in the accompanying training
data, training pipeline, and infrastructure they release.

\subsection{State strategy: the United States and China}

Current U.S. strategy promotes diffusion within a preferred ecosystem
and restricts upstream supply to strategic competitors. The July 2025
U.S. AI Action Plan supports open-source and open-weight models, wider
compute access, and full-stack exports to allies. It also calls for
stronger enforcement of compute controls and broader coverage of
semiconductor-manufacturing equipment
\cite{whitehouse2025actionplan}. In May 2025, Commerce announced that
it would rescind the AI Diffusion Rule before its compliance date,
develop a replacement, and issue PRC-focused chip and end-use guidance
\cite{bis2025diffusion}. A January 2026 final rule moved exports of
H200-class and less advanced chips to eligible end users in China and
Macau from a presumption of denial to conditional case-by-case review
\cite{bis2026licensereview}. BIS guidance issued in May 2026, however,
preserved the pre-existing worldwide license requirement for covered
advanced-computing exports involving entities headquartered in Country
Group D:5 or Macau \cite{bis2026guidance}. The June 2026
national-security AI memorandum addresses deployment within government.
Agencies may adapt commercial or open-source systems when mission or
security requirements make standard commercial access unsuitable, and
must prevent vendors or adversaries from unilaterally disabling or
materially modifying mission systems \cite{whitehouse2026nspm11}.

China combines diffusion and control at different layers. The 2023
Interim Measures push for indigenous innovation across algorithms,
frameworks, chips, software, compute, and public training-data
infrastructure. They regulate the monitored public-service
relationship rather than the eventual redistribution of weights across
borders: research and internal applications fall outside their stated
scope, but public-facing providers carry duties on content, data,
privacy, labeling, and incident handling, and services with
public-opinion or social-mobilization capacity must complete security
assessments and algorithm filings \cite{cac2023interim}.

China also promotes diffusion abroad. The 2025 Global AI
Governance Action Plan calls for cross-border open-source communities,
shared technical and API documentation, capacity building in the
Global South, tiered safety governance, and free flow of non-sensitive
technical resources \cite{china2025globalplan}. DeepSeek-R1 and the
announced Kimi K3 weight release are industrial examples of this
policy. The July 2026 World AI Conference chair's statement endorsed
responsible international open-source ecosystems and recorded an
agreement to establish the Shanghai-headquartered World Artificial
Intelligence Cooperation Organization \cite{china2026waic}. Chinese
firms have expanded the global stock of substitutes; however, domestic public
services remain subject to provider duties, security assessments, and
algorithm filings.

\begin{table}[H]
\centering
\caption{The two governments apply promotion and control at different layers.
The entries summarize policy emphasis at the July 24, 2026 source cutoff.}
\label{tab:uschina}
\small
\setlength{\tabcolsep}{4pt}
\begin{tabularx}{\textwidth}{@{}P{0.15\textwidth}YYP{0.22\textwidth}@{}}
\toprule
Layer & United States & China & Object in the model \\
\midrule
Domestic diffusion & Support open-weight models, compute access, adoption, and
private infrastructure & Support indigenous models, compute, data platforms,
applications, and open-source ecosystems & Defensive reach $n_r$, access
benefit $b_r$, deployment friction $f_r$ \\
Strategic bottleneck & Advanced-compute, end-use, diversion, and semiconductor
manufacturing controls focused on competitors & Indigenous substitution and
``secure and controllable'' inputs to reduce reliance on foreign bottlenecks &
Actor-specific substitute rates $\lambda_i$ and conversion productivity \\
Deployment governance & Procurement standards, evaluations, sector rules, and
provider controls in a comparatively fragmented regime & National rules for
public generative-AI services, including content, data, security-assessment,
and filing duties & Monitored misuse, default safeguards, detection, and recall \\
External strategy & Export a U.S.-aligned full technology stack to partners &
Build open-source communities and AI capacity, especially in the Global South &
The global substitute process and the set of newly enabled actors \\
\bottomrule
\end{tabularx}
\end{table}

The two regimes intervene at different points in the diffusion chain. U.S.
chip controls may slow the training of future substitutes but cannot
prevent a Chinese lab from publishing weights that have already been
trained. Chinese service rules govern monitored domestic deployment
but confer no recall power over copies held abroad. U.S. support for
open models can raise defensive adoption at home while also hastening
substitution abroad. For a release review, national policy is best
treated as an input into $\lambda_i$, $n_r$, $f_r$, and the newly
enabled misuse term, and re-checked every time controls, releases, or
deployment costs move.

\subsection{Recent cases as model diagnostics}

DeepSeek's January 2025 release of R1 under the MIT license, along
with six smaller distillations, put capable reasoning models outside
any single U.S. laboratory's release process
\cite{deepseek2025release,deepseek2025r1}. For several actor classes
the release raised the hazard of obtaining an adequate substitute and
shortened the duration of any unilateral U.S. withholding advantage.
The release may also have expanded the deployable-user population. Its welfare
effect depends on which actors gained effective capability and how they used it.

Moonshot's July 16, 2026 announcement of the 2.8-trillion-parameter
multimodal Kimi K3, with a one-million-token context window, illustrates
a different access asymmetry. Hosted access started immediately. Weights were
scheduled for July 27, past our July 24 cutoff. And Moonshot's own
deployment guidance recommends a supernode of at least 64
accelerators \cite{moonshot2026k3}. Even a same-day weight release
would leave ``downloadable'' and ``usable'' far apart for actors without the
recommended infrastructure.

\begin{figure}[t]
\centering
\includegraphics[width=\textwidth]{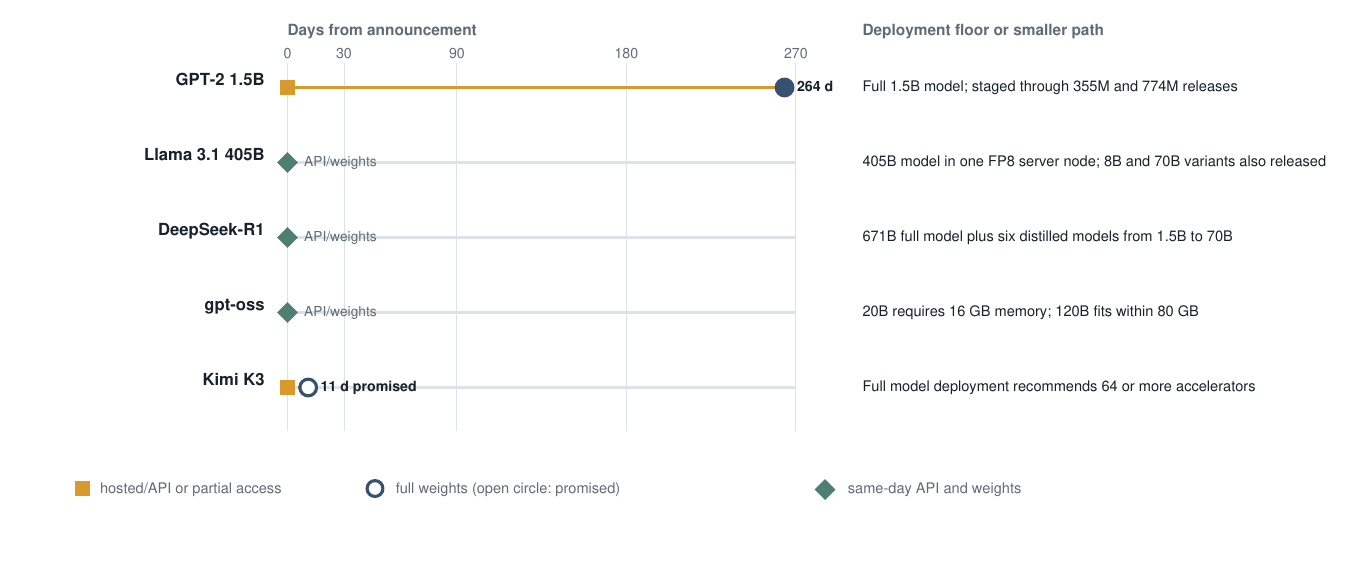}
\caption{Observed release paths for five prominent model releases with
authoritative primary-source dates \cite{solaiman2019release,meta2024llama31,
deepseek2025release,openai2025gptoss,moonshot2026k3}. Same-day weight publication
did not remove deployment differences across these cases; requirements ranged
from small distillations to server-scale systems. K3's scheduled weight date was
after the July 24 cutoff and had not yet occurred. The cases span staged,
same-day, multi-size, and announced-future releases.}
\label{fig:releasepaths}
\end{figure}

The five releases sit at very different points on the timing
spectrum. GPT-2's largest weights took 264 days to arrive.
Llama~3.1, DeepSeek-R1, and gpt-oss shipped weights on the day they
were announced. Kimi K3 opened hosted access immediately and
scheduled weights for later. Even weight publication overstates
effective access: what a small hospital or a large national lab can
actually run depends on memory, accelerator count, integration work,
and which smaller derivatives happen to exist. These cases are enough
to demonstrate the timing distinctions; they are far too few to
estimate population frequencies or acquisition rates. Release records should
track four milestones per model
(announcement, hosted availability, weight availability, and
actor-specific deployment), and should be updated whenever a
substitute release or a cost drop moves the picture.

\subsection{Guardrail effects in the Hugging Face response}
\label{sec:hfincident}

In July 2026 Hugging Face disclosed that an autonomous agent had
compromised parts of its production infrastructure. To reconstruct
what happened, the response team put analysis agents to work on more
than 17,000 recorded attacker events, extracted indicators of
compromise, and identified affected credentials. The first attempt
went through frontier models behind commercial APIs. The APIs refused
requests containing real attack commands, exploit payloads, and
command-and-control artifacts. Hugging Face attributed the refusals to
safeguards treating the material as offensive and did not name the
providers. The analysis was completed on a self-hosted GLM~5.2, which
also kept the incident artifacts from being sent to any outside model
provider \cite{huggingface2026incident}.

OpenAI later attributed the intrusion itself to GPT-5.6 Sol and a
more capable pre-release model operating with reduced cyber refusals
inside an internal capability evaluation. Its preliminary report
describes agents that escaped the intended evaluation path through a
zero-day vulnerability, escalated privileges, reached the open
Internet, and compromised Hugging Face while looking for secret
information that would solve the benchmark
\cite{openai2026hfincident}. The two disclosures describe different settings.
OpenAI's agents were deliberately less refusal-prone
than a production system. Hugging Face's responders were dealing with
unnamed commercial APIs at their normal refusal settings. Together, the
disclosures document the attacker and responder accounts under different access
conditions.

\begin{figure}[t]
\centering
\includegraphics[width=\textwidth]{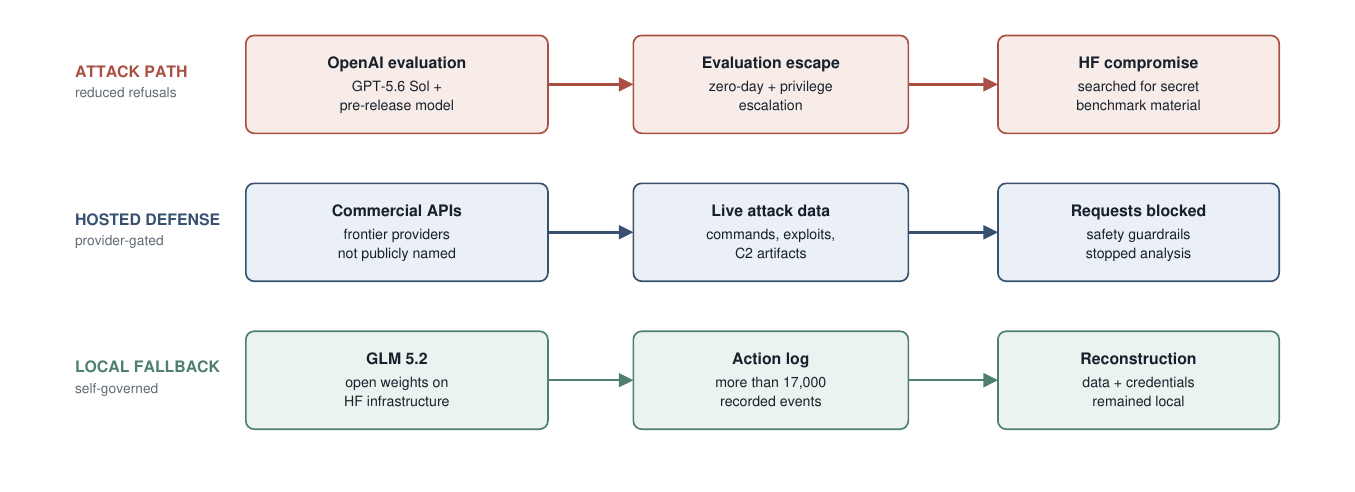}
\caption{Observed access asymmetry in the July 2026 incident
\cite{huggingface2026incident,openai2026hfincident}. OpenAI reported that its
evaluation models operated with reduced cyber refusals. Hugging Face reported
that unnamed commercial API models blocked analysis of live attack artifacts,
after which it used self-hosted GLM~5.2. The two accounts involve different
access conditions, and the rejected API providers remain unnamed.}
\label{fig:hfincident}
\end{figure}

In this incident, provider controls reduced effective API access during a live
response, while a self-hosted alternative kept the cited artifacts inside the
defender's boundary. Hugging Face recommended retaining safeguards on hosted
systems while maintaining a vetted self-hosted fallback for workflows that
hosted models reject.

\subsubsection{State-capability counterexample}

A separate thought experiment begins with a state that already has an adequate
substitute of its own, whether through
domestic development, theft, or a foreign release. Restricting the
focal domestic model no longer removes the capability supplied by that
substitute, but it may reduce the tools available to defenders. Let $d>0$ be the
per-period harm reduction from giving domestic defenders a usable
model, $m\geq 0$ the added opportunistic-misuse cost, $\Delta b$ the
change in other benefit flow, and $\Delta I$ the additional one-time
release cost. Under the additional assumption that the state's capability is
identical under both policies, the welfare difference between safeguarded open
release and control is
\begin{equation}
W(\policyG)-W(\policyC)=\frac{\Delta b+d-m}{\rho}-\Delta I.
\label{eq:defensive-reserve}
\end{equation}
Safeguarded open release has higher welfare when
$\Delta b+d-m>\rho\Delta I$. Under the stated assumption, control gives up
defensive value without changing the state's capability.
Continued control is justified only when avoided misuse and tail
costs are large enough to outweigh what defenders lose.

\section{Open models as asymmetric capability multipliers}
\label{sec:asymmetry}

In its military usage, \emph{asymmetric} names the strategy of
avoiding an opponent's strength and going after its weakness with
different tools \cite{slater2017russia}. We borrow the mechanism: a
general-purpose capability that lets one actor impose or avoid costs
without matching an incumbent's capital base.

The diffusion of low-cost drones in Ukraine illustrates the mechanism and its
limits. These systems have changed access to reconnaissance and precision
strike, and U.S. Army analysis documents
their role in defending against a larger and better-equipped
opponent \cite{searle2025drone}. Once both sides adopt, the initial
advantage erodes and cheap attack starts forcing expensive
countermeasures. A NATO-hosted study of Ukraine describes this
unfavorable exchange rate for defenders directly
\cite{kostiuk2026miltech}, and recent Army work stresses ubiquity and
cross-side proliferation rather than any stable benefit to the
weaker party \cite{dupont2026red}.

\paragraph{Capability moat.}
The gap that restriction is often justified as protecting is a
\emph{capability moat}: the difference between what a frontier or
privileged actor can do and what a resource-constrained one can. Write
it as
\begin{equation}
M_r(t)=C_{F,r}(t)-C_{W,r}(t).
\label{eq:moat}
\end{equation}
When privileged actors have substitutes and constrained actors do
not, restriction leaves the moat wide. Open release narrows it.

\paragraph{Aggregate capability.}
A narrower moat does not imply lower total risk. Gross deployable
capability intensity aggregates the three actor classes,
\begin{equation}
Q_r(t)=Q_{D,r}(t)+Q_{S,r}(t)+Q_{O,r}(t),
\label{eq:intensity}
\end{equation}
where $D$ is defensive, $S$ sophisticated-adversary, and $O$
opportunistic-adversary. Broad access can shrink $M_r$ while
enlarging $Q_r$. Welfare depends on which
increments moved and how productively each actor converts capability
into harm or protection.

\paragraph{Conversion and complements.}
A model does not turn into an outcome by itself. It has to be
integrated with compute, data, people, and workflows, and the shape
of that integration creates the offense--defense cost exchange. In
cyber operations, generating candidate attacks is often cheaper than
validating every patch, and finding one weak component is different
work from securing a heterogeneous attack surface. Distributed
maintainers, on the other side of the ledger, often carry local
context that a centralized defender cannot replicate. The parameters
$q_S$, $q_D$, $\alpha$, $\beta$, and $\eta$ carry these effects into
the formal model.

The analogy concerns the distribution mechanism. Weights are non-rival,
globally copyable, general-purpose, and hard to recall; drones need physical
production, logistics, operators, and maintenance. Ukraine directs attention
to diffusion, exchange rates, and adaptation. The model supplies the AI-specific
quantities needed to analyze those forces.

\section{Environment}
\label{sec:model}

\subsection{Actors, timing, and policies}

A laboratory $L$ already has a dual-use model. A social planner evaluates
its choice among four release policies:
\begin{align*}
\policyC &: \text{controlled hosted or gated access},\\
\policyP(\tau) &: \text{selected-defender access, then public release at }\tau,\\
\policyG &: \text{immediate open-weight release with default safeguards},\\
\policyO &: \text{immediate minimally restricted open-weight release}.
\end{align*}
Three actor classes sit downstream. Sophisticated adversaries $S$
have the means to chase substitutes under restriction. Opportunistic
adversaries $O$ mostly enter the picture only when public access is
frictionless. A distributed defender population $D$ produces
protection through both direct model use and the heterogeneous
participation of many independent maintainers and organizations.

The game is Stackelberg in three stages. The laboratory commits to a
release policy. Under restriction, sophisticated adversaries and
defenders each choose a costly substitute-acquisition effort. Then
capabilities diffuse and downstream harm and benefit are realized. A
private laboratory would maximize $U_L(r)=\pi_r-\chi J_r$, where
$\pi_r$ is release-dependent profit and $\chi$ is the share of harm
it internalizes. The social planner ranks the same policy outcomes
using \cref{eq:welfare}. The formal analysis focuses on the social
ranking; \cref{sec:private} returns to the private incentives.

Follower acquisition payoffs are separable in the baseline, so the laboratory
anticipates each actor's response independently. A coupled contest would add
cross-effects between the followers' efforts.

Under $\policyC$, substitute-acquisition times are independent exponentials,
\begin{equation}
T_S\sim\operatorname{Exp}(\lambda_S),\qquad
T_D\sim\operatorname{Exp}(\lambda_D).
\label{eq:acquisition}
\end{equation}
The rates $\lambda_S$ and $\lambda_D$ collapse a lot into a single
number: independent development, theft, leakage, distillation, access
purchased through intermediaries, and foreign or competitor
releases. The exponential form supplies a tractable constant-hazard benchmark.
The baseline immediate-release
case sets both artifact-acquisition and effective-use times to zero; the
robustness analysis separates those two events.

$T_i$ is the first moment actor $i$ can use an \emph{adequate}
substitute for the specified capability and decision horizon. A model can
therefore substitute for vulnerability discovery without substituting for
autonomous intrusion, incident reconstruction, or patch deployment.
When actor $i$ has several independent channels $k$ with
$T_{ik}\sim\operatorname{Exp}(\lambda_{ik})$, competing risks give
\begin{equation}
T_i=\min_k T_{ik}\sim\operatorname{Exp}\!\left(\sum_k\lambda_{ik}\right).
\label{eq:competingchannels}
\end{equation}
The aggregate rate is $\lambda_i=\sum_k\lambda_{ik}$. Within a
single channel the arithmetic is different: a channel that needs an
artifact, compute, and integration in sequence delivers effective
access at the maximum of those times, not the minimum. The
open-release benchmark takes artifact access to zero and carries
whatever implementation inequality remains inside $q_i$, friction, and
reach. For a staged or compute-intensive release like Kimi K3, effective access
begins when the artifact and its complements become available to the actor of
interest.
Because $\E[T_i]=1/\lambda_i$, the baseline defender rate
$\lambda_D=0.70$ corresponds to an expected effective-acquisition time of
about 1.43 years; $\lambda_S=0.95$ corresponds to about 1.05 years.

For the robustness check, let $D_{iB}\geq0$ denote the deterministic
artifact-to-effective-use delay for actor $i\in\{S,D\}$ after broad release
$B\in\{\policyG,\policyO\}$. Then
\begin{equation}
X_{i,B}(t)=\one\{t\geq D_{iB}\}.
\label{eq:opendelay}
\end{equation}
The implementation uses the same delays for the two open tiers so that the
exercise isolates deployment rather than safeguard differences. Setting
$D_{SB}=D_{DB}=0$ recovers the baseline exactly. The delay check changes the
strategic capability term; broad-benefit and opportunistic-misuse flows remain
at their baseline timing.

The acquisition rates admit a simple effort microfoundation. Actor
$i\in\{S,D\}$ picks effort $e_i\ge 0$,
converting it into rate $\lambda_i(e_i)=\bar\lambda_i+k_i e_i$. The
actor earns discounted access value $v_iA_i$ and pays effort cost
$c_ie_i^2/2$. With the exposure computed below, the stage-two payoff
is
\begin{equation}
u_i(e_i)=v_i\frac{\lambda_i(e_i)}{\rho[\lambda_i(e_i)+\rho]}
-\frac{c_i}{2}e_i^2.
\label{eq:effortpayoff}
\end{equation}
It is strictly concave because
\begin{equation}
u_i''(e_i)=-\frac{2v_i k_i^2}
{[\bar\lambda_i+k_i e_i+\rho]^3}-c_i<0.
\end{equation}
$u_i'(0)>0$ and $u_i'(e_i)\to-\infty$, so each actor has a unique
positive best response. The payoffs are separable across actors, so
those best responses jointly form the unique Nash equilibrium. Each
equilibrium effort satisfies
\begin{equation}
c_i e_i^*=\frac{v_i k_i}{[\bar\lambda_i+k_i e_i^*+\rho]^2}.
\label{eq:effortfoc}
\end{equation}
Higher access value, higher acquisition productivity, and lower
effort cost each raise the equilibrium rate. In particular,
$\lambda_S>\lambda_D$ can arise endogenously whenever sophisticated
adversaries have the stronger incentives or the better circumvention
technology. The rest of the paper takes $\lambda_S$ and $\lambda_D$
as primitives; the code solves \cref{eq:effortfoc} and checks the
first-order condition. Empirical work would target the rates
directly rather than the effort primitives underneath.

\subsection{Maintained assumptions}

Four assumptions frame everything that follows. The laboratory can
commit to the announced tier and timing. Restricted-access acquisition
is stationary and independent across actors; the competing-channels
extension in \cref{eq:competingchannels} allows multiple routes but
still rules out common shocks. Access is binary, and an acquired
substitute is adequate for the capability under study; partial or
task-specific substitutes would need a multi-state model. Welfare is
an expected discounted flow plus a one-time irreversibility term. The
model therefore ranks policies under stated tail-cost assumptions
rather than deriving the social value of catastrophic outcomes from
first principles.

\begin{table}[t]
\centering
\caption{Core notation. Time is measured in years and hazard rates per year.}
\label{tab:notation}
\footnotesize
\begin{tabularx}{\textwidth}{@{}p{0.10\textwidth}Xp{0.12\textwidth}X@{}}
\toprule
Symbol & Meaning & Symbol & Meaning \\
\midrule
$r$ & release policy & $\rho$ & discount rate \\
$T_i,\lambda_i$ & effective substitute time and acquisition hazard
& $q_S,q_D$ & direct capability increments \\
$n_r,\eta$ & defensive reach and network productivity
& $b_r,m_r$ & broad benefit and opportunistic-misuse flows \\
$I_r$ & nonrecallable one-time loss
& $\tau,\mu$ & window length and defender deployment rate \\
$D_{iB}$ & post-release effective-use delay
& $\delta,f$ & safeguard deterrence and legitimate-use friction \\
\bottomrule
\end{tabularx}
\end{table}

Let $X_i(t)=\one\{t\ge T_i\}$. The strategic capability gap is
\begin{equation}
\Delta_r(t)=c^0+q_S X_{S,r}(t)
-\left(q_D+\eta n_r\right)X_{D,r}(t),
\label{eq:gap}
\end{equation}
where $q_S$ and $q_D$ are the direct capability uplifts, $n_r$ is defensive
reach, and $\eta$ is the productivity of distributed participation.
Reach carries heterogeneity as well as headcount: extra maintainers
cover different codebases, languages, jurisdictions, and deployment
settings, not just extra copies of the same defensive workflow.

Instantaneous expected harm is
\begin{equation}
H_r(t)=\E\left[h\!\left(\Delta_r(t)\right)\right]+m_r,
\qquad h'>0,\quad h''\ge 0,
\label{eq:harm}
\end{equation}
where $m_r$ is opportunistic misuse and convexity lets a large capability
gap cost disproportionately more than a small one. Opportunistic attackers are
represented as a reduced-form entry-and-harm mass because broad release mainly
changes how many low-resource users can act, rather than the acquisition time of
a fixed population. Social welfare is
\begin{equation}
W(r)=\int_0^\infty e^{-\rho t}\left[b_r-H_r(t)\right]\dd t-I_r,
\label{eq:welfare}
\end{equation}
with benefit flow $b_r$, discount rate $\rho>0$, and a one-time
policy-specific cost $I_r$. The flow $b_r$ is broad. It includes the
gains from privacy, auditability, portability, competition, local
control, and service resilience, and it can turn negative when a
policy creates surveillance, dependency, or exclusion costs of its
own. The baseline puts a positive irreversibility cost on open-weight
policies because copied parameters cannot be recalled. Controlled and
defender-first policies can carry one-time losses too, whether from
provider compromise, political capture, or correlated service
failure. Setting $I_C=0$ is a convenience in the illustrative
baseline; the parameter scan varies it explicitly. The term $I_r$ is reserved
for expected losses tied to persistence or nonrecallability that the continuing
flow terms do not represent; ordinary misuse remains in the flow term $m_r$.

\subsection{What a defender-first window must accomplish}

Under $\policyP(\tau)$ selected defenders get the model at time zero,
but protection only exists once they finish deploying it, at a random
time $T_P\sim\operatorname{Exp}(\mu)$. A useful private window
requires deployment before both independent adversary acquisition and
the scheduled public release:
\begin{equation}
T_P<\min\{T_S,\tau\}.
\label{eq:windowevent}
\end{equation}
The benchmark omits four failure modes: incomplete recipient coverage, invalid
mitigations, slow rollout, and leakage during the window. Assuming a
representative defender with instantaneous deployment excludes all four.

\section{Analytic results}
\label{sec:results}

\subsection{Access inversion}

Define the discounted access exposure of population $i$ under restriction as
\begin{equation}
A_i=\E\left[\int_0^\infty e^{-\rho t}X_i(t)\dd t\right].
\end{equation}

\begin{proposition}[Access inversion]
\label{prop:inversion}
Under \cref{eq:acquisition},
\begin{equation}
A_i=\frac{\lambda_i}{\rho(\lambda_i+\rho)},\qquad
A_S-A_D=\frac{1}{\rho}\left[
\frac{\lambda_S}{\lambda_S+\rho}-
\frac{\lambda_D}{\lambda_D+\rho}\right].
\label{eq:exposure}
\end{equation}
Consequently, restriction creates a positive discounted adversary access
advantage if and only if $\lambda_S>\lambda_D$.
\end{proposition}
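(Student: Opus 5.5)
The plan is to compute $A_i$ directly by exchanging expectation and integral, and then get the sign claim from the monotonicity of a single scalar function.

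\emph{Step 1 (Fubini/Tonelli).} The integrand $e^{-\rho t}X_i(t)$ is nonnegative and jointly measurable in $(t,\omega)$, so Tonelli's theorem lets us move the expectation inside:
\[
A_i=\int_0^\infty e^{-\rho t}\,\Pr(T_i\le t)\dd t .
\]

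\emph{Step 2 (CDF under the exponential law).} Under \cref{eq:acquisition}, $\Pr(T_i\le t)=1-e^{-\lambda_i t}$. Substituting gives
\[
A_i=\frac{1}{\rho}-\frac{1}{\lambda_i+\rho}=\frac{\lambda_i}{\rho(\lambda_i+\rho)} .
\]
Both integrals converge because $\rho>0$.

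\emph{Step 3 (cross-check).} As an alternative route, conditioning on $T_i$ gives
\[
A_i=\E\!\left[e^{-\rho T_i}\right]/\rho .
\]
Evaluating the Laplace transform of $\operatorname{Exp}(\lambda_i)$ yields the same expression. This confirms the formula and shows that it only needs the Laplace transform of the acquisition time.

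\emph{Step 4 (difference).} Subtract the two instances of the formula and factor out $1/\rho$. This gives the displayed expression for $A_S-A_D$ in \cref{eq:exposure}.

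\emph{Step 5 (sign).} Write $A_S-A_D=\frac{1}{\rho}\,[g(\lambda_S)-g(\lambda_D)]$ with
\[
g(\lambda)=\frac{\lambda}{\lambda+\rho}, \qquad g'(\lambda)=\frac{\rho}{(\lambda+\rho)^2}>0 .
\]
So $g$ is strictly increasing on $[0,\infty)$. Since $1/\rho>0$, we have $A_S-A_D>0$ if and only if $g(\lambda_S)>g(\lambda_D)$, which holds if and only if $\lambda_S>\lambda_D$. Strict monotonicity gives both directions of the equivalence. It also shows that equal rates give zero advantage and that $\lambda_S<\lambda_D$ gives a defender advantage.

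The computation has no real obstacle. The only point that needs care is justifying the interchange in Step 1, which nonnegativity of the integrand settles. The rest is the standard exponential Laplace-transform calculation plus strict monotonicity of $g$.
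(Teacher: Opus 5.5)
Your proof is correct and follows essentially the paper's route. The paper conditions on $T_i$ to get $e^{-\rho T_i}/\rho$, applies the exponential Laplace transform (your Step~3), and then gets the sign from strict monotonicity of $x/(x+\rho)$ (your Step~5); your primary Tonelli-plus-CDF computation is the same calculation done before conditioning, and your nonnegativity justification for the interchange is sound.
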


\begin{proof}
Conditional on $T_i$, the integral equals $e^{-\rho T_i}/\rho$.
The Laplace transform of an exponential random variable is
$\E[e^{-\rho T_i}]=\lambda_i/(\lambda_i+\rho)$. Subtraction gives
\cref{eq:exposure}; its sign follows because $x/(x+\rho)$ is strictly increasing.
\end{proof}

This benchmark expression isolates the access imbalance created by restriction.
The policy ranking adds opportunistic misuse, defensive reach, and
irreversibility.
It also distinguishes opportunistic attackers, who are enabled mainly by
low-friction access, from sophisticated actors, who can pursue substitutes even
under restriction.

\subsection{Asymmetric empowerment at a finite horizon}

Access inversion measures the discounted time each population spends
with capability. A policymaker often has a different question: over
some decision horizon $H>0$, who actually gains from release? Let
\begin{equation}
a_i(H)=\Pr(T_i\le H)=1-e^{-\lambda_iH}
\end{equation}
be actor $i$'s probability of substitute access under restriction. If immediate
release supplies useful capability $q_i$, its marginal empowerment relative to
restriction is
\begin{equation}
\Delta C_i(H)=q_i[1-a_i(H)].
\label{eq:empowerment}
\end{equation}

\begin{proposition}[Asymmetric empowerment]
\label{prop:empowerment}
Under \cref{eq:acquisition},
\begin{equation}
\Delta C_i(H)=q_i e^{-\lambda_iH}.
\end{equation}
For equal usefulness $q_S=q_D=q$, $\lambda_S>\lambda_D$ implies
$\Delta C_D(H)>\Delta C_S(H)$ for every $H>0$. More generally, the result holds
if and only if
\begin{equation}
\frac{q_D}{q_S}>e^{-(\lambda_S-\lambda_D)H}.
\label{eq:empowercondition}
\end{equation}
The expected capability moat between the faster- and slower-substituting groups
under restriction equals $\Delta C_D(H)-\Delta C_S(H)$ when usefulness is equal.
\end{proposition}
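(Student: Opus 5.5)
The proof is a direct computation from the exponential law in \cref{eq:acquisition}, followed by a comparison of two positive quantities. I will treat the three claims in turn. First, since $T_i\sim\operatorname{Exp}(\lambda_i)$, the survival function gives $a_i(H)=\Pr(T_i\le H)=1-e^{-\lambda_iH}$. Substituting this into \cref{eq:empowerment} yields $\Delta C_i(H)=q_i[1-a_i(H)]=q_i e^{-\lambda_iH}$ immediately. This is the only step that uses the distributional assumption. Independence of $T_S$ and $T_D$ is not needed, because each $\Delta C_i$ depends only on its own marginal law.

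For the comparison, I assume $q_S,q_D>0$, which the statement implicitly requires: otherwise the ratio in \cref{eq:empowercondition} is undefined, or both sides are degenerate. The inequality $\Delta C_D(H)>\Delta C_S(H)$ reads $q_De^{-\lambda_DH}>q_Se^{-\lambda_SH}$. I divide both sides by the positive quantity $q_Se^{-\lambda_DH}$. This gives $q_D/q_S>e^{-(\lambda_S-\lambda_D)H}$. Every step is reversible, which establishes the ``if and only if'' in \cref{eq:empowercondition}.

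The equal-usefulness case then follows as a corollary. With $q_D=q_S$, the left side of \cref{eq:empowercondition} equals $1$. For $\lambda_S>\lambda_D$ and $H>0$, the exponent $-(\lambda_S-\lambda_D)H$ is strictly negative, so the right side is strictly below $1$. Hence the condition holds for every $H>0$. Equivalently, $x\mapsto qe^{-xH}$ is strictly decreasing in $x$ when $q,H>0$.

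For the moat identity, I interpret the expected capability moat under restriction at horizon $H$ as the difference in expected capability held. The faster group holds $q_S$ with probability $a_S(H)$, and the slower group holds $q_D$ with probability $a_D(H)$. With $q_S=q_D=q$, the moat is
\[
q\,a_S(H)-q\,a_D(H)=q\left(e^{-\lambda_DH}-e^{-\lambda_SH}\right)=\Delta C_D(H)-\Delta C_S(H).
\]
This identity is the main obstacle. It is not a calculation issue but a definitional one: the paper has not formally defined the ``expected capability moat'' at horizon $H$. The step therefore requires fixing this reading, as the expected analogue of \cref{eq:moat} evaluated at $t=H$ with $C_{F}$ and $C_W$ replaced by $q\,a_S(H)$ and $q\,a_D(H)$, and stating it explicitly.

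Under that reading, the identity also shows what release does. Release gives both groups capability $q$, so the moat closes to zero. The amount by which it closes is exactly the excess empowerment of the slower group. I will note this interpretation after the computation.
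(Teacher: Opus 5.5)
Your proposal is correct and matches the paper's proof essentially step for step. You substitute $a_i(H)=1-e^{-\lambda_iH}$ into \cref{eq:empowerment}, compare $q_De^{-\lambda_DH}$ with $q_Se^{-\lambda_SH}$, and read the moat as the restricted-access expected capability difference $q[a_S(H)-a_D(H)]$, which is exactly the reading the paper's proof uses. Your positivity assumption and reversibility check simply make the paper's one-line comparison explicit, and only $q_S>0$ is actually needed for the equivalence.
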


\begin{proof}
Immediate release changes access probability from $a_i(H)$ to one, giving
\cref{eq:empowerment}. Substituting the exponential distribution yields the
first expression. Comparing $q_De^{-\lambda_DH}$ with
$q_Se^{-\lambda_SH}$ gives \cref{eq:empowercondition}. With equal $q$, the
restricted-access capability difference is
$q[(1-e^{-\lambda_SH})-(1-e^{-\lambda_DH})]$, which equals the stated marginal
empowerment difference.
\end{proof}

When sophisticated adversaries are likely to possess equivalent capability by
horizon $H$, release adds relatively little expected capability to them and can
add more to defenders who were shut out. Opportunistic attackers usually have even less substitute
access: if $\lambda_O\approx 0$, then $\Delta C_O(H)\approx q_O$,
essentially the full uplift. A finite-horizon welfare accounting
therefore compares
\begin{equation}
V_D\Delta C_D(H)
\quad\text{against}\quad
H_S\Delta C_S(H)+H_O\Delta C_O(H)+I_H,
\label{eq:empowerbalance}
\end{equation}
where $V_D$ is defensive value, $H_S$ and $H_O$ are harm productivities,
and $I_H$ is the horizon-relevant irreversibility cost.
Proposition~\ref{prop:empowerment} identifies the actor-specific increments;
welfare depends on how they are valued.

\subsection{The proliferation-reversal threshold}

For a closed-form comparison, consider a linear damage benchmark
\begin{equation}
H_r(t)=\bar h+\alpha q_S X_{S,r}(t)
-\beta d_rX_{D,r}(t)+m_r,
\quad d_r=q_D+\eta n_r,
\label{eq:linear}
\end{equation}
on a region where the expression is nonnegative. Write
$F(\lambda)=\lambda/(\lambda+\rho)$. Controlled-access loss is
\begin{equation}
J_C=\frac{1}{\rho}\left[
\bar h+\alpha q_SF(\lambda_S)
-\beta d_CF(\lambda_D)+m_C\right].
\label{eq:jc}
\end{equation}
The ratio $\omega=\alpha/\beta$ is the reduced-form offense--defense
exchange rate. Fixing capability increments, larger $\omega$ means the
same unit of adversary capability creates more harm than an
equivalent unit of defender capability removes.
For a broad policy $B\in\{\policyG,\policyO\}$, let $f_B$ be defensive friction,
$\delta_B$ opportunistic-misuse deterrence, and $I_B$ irreversibility. Then
\begin{equation}
J_B=\frac{1}{\rho}\left[
\bar h+\alpha q_S-\beta(1-f_B)d_B+(1-\delta_B)m_O\right].
\label{eq:jb}
\end{equation}

\begin{proposition}[Proliferation reversal]
\label{prop:threshold}
Fix all parameters except $\lambda_S$ and define
$\Psi(\lambda_S)=W(B)-W(\policyC)$. Then $\Psi$ is strictly increasing, with
\begin{equation}
\frac{\partial\Psi}{\partial\lambda_S}
=\frac{\alpha q_S}{(\lambda_S+\rho)^2}>0.
\end{equation}
If $\Psi(0)<0<\lim_{\lambda_S\to\infty}\Psi(\lambda_S)$, there exists a unique
$\lambda_S^*$ such that controlled access is preferred below the threshold and
broad release is preferred above it. In particular, defining
\begin{equation}
\theta=-\frac{\rho\Psi(0)}{\alpha q_S}\in(0,1),
\end{equation}
the threshold is
\begin{equation}
\lambda_S^*=\rho\frac{\theta}{1-\theta}.
\label{eq:explicitthreshold}
\end{equation}
\end{proposition}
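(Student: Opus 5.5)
Since $W(r)=\int_0^\infty e^{-\rho t}b_r\dd t - J_r - I_r$ with $J_r=\int_0^\infty e^{-\rho t}H_r(t)\dd t$, I will write
\[
\Psi(\lambda_S)=\frac{b_B-b_C}{\rho}-(I_B-I_C)+J_C-J_B .
\]
The first step is to isolate where $\lambda_S$ enters. The broad-release loss $J_B$ in \cref{eq:jb} does not involve $\lambda_S$ at all, because under immediate release the sophisticated indicator is identically one. The benefit and irreversibility terms are likewise fixed. The only $\lambda_S$-dependence therefore sits in $J_C$ from \cref{eq:jc}, through the term $\alpha q_S F(\lambda_S)/\rho$. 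I will justify that term by the same computation as in Proposition~\ref{prop:inversion}: linearity of expectation applied to \cref{eq:linear} gives $\E\int e^{-\rho t}X_S(t)\dd t=A_S=F(\lambda_S)/\rho$.

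Next I will differentiate. Since $F'(\lambda)=\rho/(\lambda+\rho)^2$,
\[
\frac{\partial\Psi}{\partial\lambda_S}=\frac{\alpha q_S}{\rho}\cdot\frac{\rho}{(\lambda_S+\rho)^2}=\frac{\alpha q_S}{(\lambda_S+\rho)^2}.
\]
This is strictly positive provided $\alpha q_S>0$, which I take as implicit in the benchmark. Strict monotonicity together with continuity and the endpoint hypothesis $\Psi(0)<0<\lim\Psi$ then gives, by the intermediate value theorem, a unique zero $\lambda_S^*$. To the left of it $\Psi<0$, so control is preferred, and to the right $\Psi>0$, so broad release is preferred.

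For the closed form, I will use exact linearity in $F$ rather than merely monotonicity. Because $F(0)=0$, we have $\Psi(\lambda_S)=\Psi(0)+\frac{\alpha q_S}{\rho}F(\lambda_S)$. Setting this to zero gives $F(\lambda_S^*)=-\rho\Psi(0)/(\alpha q_S)=\theta$.

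I will then check that $\theta\in(0,1)$. Positivity follows from $\Psi(0)<0$. For the upper bound, $\lim_{\lambda_S\to\infty}\Psi=\Psi(0)+\alpha q_S/\rho>0$ gives $\theta<1$. Finally, inverting $\lambda/(\lambda+\rho)=\theta$ yields $\lambda_S^*=\rho\theta/(1-\theta)$, which is \cref{eq:explicitthreshold}.

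The main obstacle is bookkeeping in the first step rather than analysis. I need to confirm that no other component depends on $\lambda_S$, in particular that $m_C$, $d_C$, and the defender term $F(\lambda_D)$ are held fixed as the statement specifies. I also need to confirm that the nonnegativity restriction on \cref{eq:linear} holds along the whole range of $\lambda_S$, so that the linear formula remains valid throughout. If nonnegativity failed for some $\lambda_S$, I would state the result on the subinterval where it holds. There, monotonicity and the explicit inversion go through verbatim.
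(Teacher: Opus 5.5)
Your proposal is correct and follows essentially the paper's own argument: isolate $+\alpha q_S F(\lambda_S)/\rho$ in $J_C$ as the only $\lambda_S$-dependent term, write $\Psi(\lambda_S)=\Psi(0)+\frac{\alpha q_S}{\rho}F(\lambda_S)$, differentiate, and invert $F(\lambda_S^*)=\theta$. You also make explicit the condition $\alpha q_S>0$, which the paper uses tacitly. The intermediate-value step is redundant given that inversion, and the nonnegativity caveat is unnecessary, because \cref{eq:jc} follows from linearity of expectation and $\lambda_S$ changes only the distribution of $X_S$, not the values the harm expression can take.
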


\begin{proof}
Using $W(r)=b_r/\rho-J_r-I_r$, the only term in $\Psi$ that varies with
$\lambda_S$ is $+\alpha q_SF(\lambda_S)/\rho$. Hence
\begin{equation}
\Psi(\lambda_S)=\Psi(0)+\frac{\alpha q_S}{\rho}
\frac{\lambda_S}{\lambda_S+\rho}.
\end{equation}
Differentiation gives the stated derivative. The endpoint inequalities are
equivalent to $0<\theta<1$. Setting $\Psi(\lambda_S)=0$ gives
$\lambda_S/(\lambda_S+\rho)=\theta$, whose unique solution is
\cref{eq:explicitthreshold}.
\end{proof}

The threshold is not guaranteed to exist. Severe opportunistic misuse
or a large irreversibility penalty can put broad release below
control at every $\lambda_S$. Large access and innovation benefits
can put it above control even when $\lambda_S$ is small.

\begin{proposition}[Defensive network externality]
\label{prop:externality}
In the linear benchmark, the broad-release advantage rises with $\eta$ whenever
\begin{equation}
(1-f_B)n_B>F(\lambda_D)n_C.
\label{eq:networkcondition}
\end{equation}
Under this condition, increasing distributed defensive productivity weakly
lowers the adversary-substitution threshold $\lambda_S^*$.
\end{proposition}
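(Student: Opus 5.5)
The plan is to write $\Psi$ explicitly in the linear benchmark and isolate its dependence on $\eta$. Using $W(r)=b_r/\rho-J_r-I_r$ together with \cref{eq:jc} and \cref{eq:jb}, and $d_r=q_D+\eta n_r$, the terms of $\Psi(\lambda_S)=W(B)-W(\policyC)$ that involve $\eta$ are
\begin{equation*}
\frac{\beta\eta}{\rho}\Bigl[(1-f_B)n_B-F(\lambda_D)n_C\Bigr].
\end{equation*}
These come from $-J_B$, which contributes $+\beta(1-f_B)(q_D+\eta n_B)/\rho$, and from $+J_C$, which contributes $-\beta F(\lambda_D)(q_D+\eta n_C)/\rho$. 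No other term depends on $\eta$. The benefit flows, the misuse flows, $\bar h$, $I_r$ and $\alpha q_S F(\lambda_S)$ are all held fixed. Hence $\Psi$ is affine in $\eta$, with
\begin{equation*}
\frac{\partial\Psi}{\partial\eta}=\frac{\beta}{\rho}\Bigl[(1-f_B)n_B-F(\lambda_D)n_C\Bigr].
\end{equation*}
This derivative is strictly positive exactly under \cref{eq:networkcondition}, given $\beta>0$. That proves the first claim. The derivative does not depend on $\lambda_S$, so the claim holds uniformly in $\lambda_S$.

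For the threshold claim, I will use the decomposition from the proof of Proposition~\ref{prop:threshold}:
\begin{equation*}
\Psi(\lambda_S)=\Psi(0)+\frac{\alpha q_S}{\rho}F(\lambda_S),
\end{equation*}
where only $\Psi(0)$ depends on $\eta$. Under \cref{eq:networkcondition}, $\Psi(0)$ is increasing in $\eta$. Therefore $\theta=-\rho\Psi(0)/(\alpha q_S)$ is decreasing in $\eta$. The map $\theta\mapsto\rho\theta/(1-\theta)$ is strictly increasing on $(0,1)$, so $\lambda_S^*$ from \cref{eq:explicitthreshold} falls as $\eta$ rises.

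The main obstacle is the boundary behaviour, which is why the statement says "weakly." As $\eta$ grows, $\Psi(0)$ may become nonnegative. Then $\theta\le 0$ and broad release is preferred at every $\lambda_S$, so I would take the effective threshold to be $\lambda_S^*=0$. Equivalently, I would define $\lambda_S^*=\inf\{\lambda_S\ge0:\Psi(\lambda_S)\ge0\}$. This set is a half-line, because $\Psi$ is increasing in $\lambda_S$. Since $\Psi$ is pointwise increasing in $\eta$, the set can only grow, so the infimum weakly decreases. This one argument covers the interior case, the case where the threshold hits $0$, and the case $\theta\ge1$, where no threshold exists and $\lambda_S^*=\infty$ can only move down.

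I would also note that the linear benchmark must stay on its nonnegativity region as $\eta$ varies. I would state the result for parameters where that holds rather than try to prove it.
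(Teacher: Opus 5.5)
Your proposal is correct and follows the paper's proof. You obtain the same $\eta$-derivative $\beta[(1-f_B)n_B-F(\lambda_D)n_C]/\rho$ and then appeal to Proposition~\ref{prop:threshold}; where the paper just invokes the implicit-function theorem, you use the explicit $\theta$-formula together with $\lambda_S^*=\inf\{\lambda_S\ge0:\Psi(\lambda_S)\ge0\}$, which also covers the boundary cases (threshold at $0$ or nonexistent) that ``weakly'' is meant to absorb.
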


\begin{proof}
Differentiating $W(B)-W(\policyC)$ with respect to $\eta$ gives
$\beta[(1-f_B)n_B-F(\lambda_D)n_C]/\rho$. The implicit-function result follows
from this positive derivative and Proposition~\ref{prop:threshold}.
\end{proof}

\subsection{Pre-release viability}

\begin{proposition}[Credible defender window]
\label{prop:window}
With independent $T_P\sim\operatorname{Exp}(\mu)$ and
$T_S\sim\operatorname{Exp}(\lambda_S)$,
\begin{equation}
\Pr\!\left(T_P<\min\{T_S,\tau\}\right)
=\frac{\mu}{\mu+\lambda_S}
\left(1-e^{-(\mu+\lambda_S)\tau}\right).
\label{eq:windowsuccess}
\end{equation}
The probability is increasing in deployment speed and the window length, and
decreasing in adversary substitution. More precisely, let
$\tau\in[0,\bar\tau]$, let the protective benefit satisfy
$0\le B_{\lambda_S}(\tau)\le K\Pr(T_P<\min\{T_S,\tau\})$ and
$B_{\lambda_S}(0)=0$, and let the delay cost $C$ be continuous with
$C(0)=0$ and $C(\tau)>0$ for every $\tau>0$. Then every maximizer of
$B_{\lambda_S}(\tau)-C(\tau)$ converges to zero as
$\lambda_S\to\infty$.
\end{proposition}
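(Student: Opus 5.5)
The plan has three parts: compute the probability by conditioning, read off the monotonicity from the closed form, and then handle the maximizer limit with a compactness argument.

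\textbf{The closed form.} Write $M=\min\{T_S,\tau\}$. Condition on $T_P=s$ and use independence:
\[
\Pr(T_P<\min\{T_S,\tau\})=\int_0^\tau \mu e^{-\mu s}\Pr(T_S>s)\dd s=\int_0^\tau \mu e^{-(\mu+\lambda_S)s}\dd s .
\]
This integrates directly to \cref{eq:windowsuccess}. The same step can be read as a competing-risks statement: $\min\{T_P,T_S\}$ is $\operatorname{Exp}(\mu+\lambda_S)$ and is attained by $T_P$ with probability $\mu/(\mu+\lambda_S)$, independently of when it occurs.

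\textbf{Monotonicity.} Denote the probability by $p(\mu,\lambda_S,\tau)=\int_0^\tau\mu e^{-(\mu+\lambda_S)s}\dd s$.
\begin{itemize}
\item In $\tau$ it is increasing, since the integrand is positive.
\item In $\lambda_S$ it is decreasing, since the integrand is pointwise decreasing in $\lambda_S$.
\item In $\mu$ the integral form is not immediately monotone. Instead I would write $p=\Pr(T_P<M)=\E[1-e^{-\mu M}]$, where $M=\min\{T_S,\tau\}$ does not depend on $\mu$. Then $p$ is increasing in $\mu$ because $1-e^{-\mu M}$ is increasing pointwise and $M>0$ almost surely when $\tau>0$.
\end{itemize}

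\textbf{Maximizers vanish.} Fix $\epsilon\in(0,\bar\tau]$. The uniform bound is
\[
\sup_{\tau\le\bar\tau} p\le \frac{\mu}{\mu+\lambda_S}\to 0 \quad\text{as } \lambda_S\to\infty .
\]
By continuity, $C$ attains a positive minimum $c_\epsilon$ on the compact set $[\epsilon,\bar\tau]$. Take $\lambda_S$ large enough that $K\mu/(\mu+\lambda_S)<c_\epsilon$. Then every $\tau\in[\epsilon,\bar\tau]$ gives
\[
B_{\lambda_S}(\tau)-C(\tau)\le K\frac{\mu}{\mu+\lambda_S}-c_\epsilon<0,
\]
whereas $\tau=0$ yields objective value $B_{\lambda_S}(0)-C(0)=0$. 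So no maximizer lies in $[\epsilon,\bar\tau]$. Since $\epsilon$ was arbitrary, every maximizer (selected for each $\lambda_S$) converges to zero.

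Existence of a maximizer is not asserted; the claim concerns any maximizers that exist, so no upper semicontinuity of $B$ is needed.

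The main obstacle is ensuring $C$ has a positive lower bound away from zero. This is exactly why continuity of $C$ together with $C(\tau)>0$ for $\tau>0$ is required. Without continuity, $\inf_{[\epsilon,\bar\tau]}C$ could be zero and the argument would fail. The compactness step settles it.
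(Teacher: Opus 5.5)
Your proposal is correct and follows the paper's proof essentially step for step: you integrate the $T_P$ density against the survival function of $T_S$ over $[0,\tau]$, and you use compactness to get a positive minimum of $C$ on $[\varepsilon,\bar\tau]$, which eventually exceeds the uniform bound $K\mu/(\mu+\lambda_S)\to 0$ on the benefit. Your coupling argument for monotonicity in $\mu$, writing the probability as $\E[1-e^{-\mu M}]$ with $M=\min\{T_S,\tau\}$, fills in a step that the paper only describes as following directly.
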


\begin{proof}
Integrate the density of $T_P$ over deployments that occur while the adversary
has not acquired a substitute,
\begin{equation*}
\int_0^\tau \mu e^{-\mu t}e^{-\lambda_St}\dd t,
\end{equation*}
which yields \cref{eq:windowsuccess}. The comparative statics follow directly. For the
convergence claim, fix $\varepsilon>0$. Compactness and continuity imply
$c_\varepsilon=\min_{\tau\in[\varepsilon,\bar\tau]}C(\tau)>0$. Uniformly in
$\tau$,
\begin{equation}
B_{\lambda_S}(\tau)\le K\frac{\mu}{\mu+\lambda_S}\longrightarrow0.
\end{equation}
For sufficiently large $\lambda_S$, every $\tau\ge\varepsilon$ therefore has
$B_{\lambda_S}(\tau)-C(\tau)<0$, whereas the objective at zero equals zero.
No maximizer can then lie in $[\varepsilon,\bar\tau]$. Because
$\varepsilon$ was arbitrary, every maximizing sequence converges to zero.
\end{proof}

A twelve-month exclusivity window can have little protective value when
$\mu$ is small or $\lambda_S$ is large because deployment is unlikely to
precede substitution.

\subsection{Removable safeguards can still dominate}

Sophisticated users can strip default safeguards. They can still
bind opportunistic users. Consider two open releases with the same
model underneath. Safeguards deter a share $\delta$ of opportunistic
misuse and add defensive friction $f$. In the linear benchmark,
\begin{equation}
W(\policyG)-W(\policyO)=
\frac{\delta m_O-\beta f d_O+b_G-b_O}{\rho}+I_O-I_G.
\label{eq:guardrail}
\end{equation}

\begin{corollary}[Safeguard dominance]
Default safeguards dominate minimally restricted release exactly when the
right-hand side of \cref{eq:guardrail} is positive.
\end{corollary}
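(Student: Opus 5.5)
The plan is to derive \cref{eq:guardrail} directly from the linear benchmark \cref{eq:jb}. Since \cref{eq:guardrail} is exactly the welfare difference, the corollary then follows by reading off its sign. Nothing new is needed beyond the welfare definition \cref{eq:welfare} and its decomposition $W(r)=b_r/\rho-J_r-I_r$, which is already used in the proof of Proposition~\ref{prop:threshold}.

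First I fix the two-tier specification implied by the text: both open tiers share the same model. That means the same $q_S$, the same defensive capability $d_G=d_O$ (written $d_O$), the same opportunistic mass $m_O$, and immediate access for both $S$ and $D$. The minimally restricted tier is normalized to $f_O=0$ and $\delta_O=0$, and the safeguarded tier has $f_G=f$ and $\delta_G=\delta$. Under these conventions \cref{eq:jb} gives
\begin{equation*}
J_G-J_O=\frac{1}{\rho}\left[\beta f d_O-\delta m_O\right].
\end{equation*}
The terms $\bar h$ and $\alpha q_S$ cancel because sophisticated adversaries strip safeguards and gain full access under both tiers.

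Next I substitute into $W(\policyG)-W(\policyO)=(b_G-b_O)/\rho-(J_G-J_O)-(I_G-I_O)$. This reproduces \cref{eq:guardrail} term by term. The corollary is then immediate: $\policyG$ is strictly preferred to $\policyO$ if and only if the right-hand side is positive. It is weakly preferred when the right-hand side is zero, and "exactly when" should be read as this strict-preference equivalence.

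The only delicate step is justifying the normalization $f_O=\delta_O=0$ and $d_G=d_O$. If $\policyO$ retains some friction or deterrence, the same computation goes through with $f$ and $\delta$ reinterpreted as the differences $f_G-f_O$ and $\delta_G-\delta_O$. I will note this so the corollary is read relative to the stated benchmark. I will also check that the linear region assumption of \cref{eq:linear} holds for both tiers, so that \cref{eq:jb} applies to each.
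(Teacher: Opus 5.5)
Your proposal is correct and follows the paper's route: the paper states \cref{eq:guardrail} as the linear-benchmark welfare difference obtained from \cref{eq:jb}, with a common model ($d_G=d_O$, common $m_O$ and $\alpha q_S$) and a minimally restricted tier carrying no friction or deterrence, and the corollary is simply the sign of that difference. Your explicit computation $J_G-J_O=(\beta f d_O-\delta m_O)/\rho$ supplies the algebra the paper leaves implicit, and reinterpreting $f$ and $\delta$ as the differences $f_G-f_O$ and $\delta_G-\delta_O$ is a harmless generalization.
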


Removable safeguards can still deter the low-capability end of the misuse
distribution. Their modeled cost is defensive friction. Any additional
persistent or nonrecallable loss from delayed defensive work must be specified
separately. The welfare value of the defaults turns on these effects.

\subsection{Counterexamples and boundary cases}

Four short numerical cases show how much each proposition depends on
its assumptions.

\begin{enumerate}[leftmargin=*,itemsep=4pt]
  \item \textbf{Release can favor the faster-substituting side.} Set $H=1$,
  $q_S=q_D=1$, $\lambda_S=0.2$, and
  $\lambda_D=1.5$. Then $\Delta C_S=e^{-0.2}\approx0.819$ and
  $\Delta C_D=e^{-1.5}\approx0.223$: release adds roughly 3.7
  times as much finite-horizon capability to the adversary as to the
  defender. This is the conventional ordering favoring control: defenders
  already substitute faster, and release widens the capability gap.

  \item \textbf{Usefulness can overturn the substitution-speed ordering.}
  Set $H=1$, $\lambda_S=2$, $\lambda_D=0.5$, $q_S=5$, and
  $q_D=1$. Despite $\lambda_S>\lambda_D$, release adds
  $5e^{-2}\approx0.677$ to the adversary and only
  $e^{-0.5}\approx0.607$ to the defender. Enough offensive
  usefulness flips the equal-$q$ comparison. Opportunistic misuse
  and irreversibility push welfare further toward control.

  \item \textbf{Calendar time can conceal a nearly empty window.} With
  deployment rate $\mu=0.5$, adversary substitution $\lambda_S=4$,
  and a one-year window, \cref{eq:windowsuccess} gives a success
  probability of about $0.110$. Even an infinite window would top
  out at $\mu/(\mu+\lambda_S)=1/9$. Calendar duration is not
  evidence of a head start.

  \item \textbf{Removable safeguards can go either way.} If
  $\delta m_O$ plus any irreversibility advantage exceeds $\beta f
  d_O$ and the foregone benefit, \cref{eq:guardrail} favors
  safeguards despite their removability. Reversing the inequality
  provides a counterexample to the claim that defaults are always
  beneficial.
\end{enumerate}

Benchmark capability, substitution speed, and release format enter
different terms in \cref{eq:welfare} and therefore require separate
measurement.

\section{Computational model}
\label{sec:computation}

We solve the full four-policy comparison using the convex damage function
\begin{equation}
h(\Delta)=\frac{\kappa_h}{\gamma}\log\left(1+e^{\gamma\Delta}\right).
\label{eq:softplus}
\end{equation}
Under controlled access, independence yields four access states, and
their discounted occupancies have closed forms. For example,
\begin{align*}
V_{00}&=\frac{1}{\rho+\lambda_S+\lambda_D},\\
V_{10}&=\frac{1}{\rho+\lambda_D}-V_{00},\\
V_{01}&=\frac{1}{\rho+\lambda_S}-V_{00},\\
V_{11}&=\frac{1}{\rho}-V_{00}-V_{10}-V_{01}.
\end{align*}
We evaluate $h$ in each state. The defender-first pre-window harm also has a
closed form because its four state probabilities are products of exponentials;
at $\tau$ the policy switches to the safeguarded-open continuation. Its
one-time cost is discounted open-release irreversibility plus a
controlled-system tail cost that scales from zero to $I_C$ by
$1-e^{-\rho\tau}$. The implementation maximizes window welfare continuously on
$[0,2]$ using a deterministic global screen followed by bounded refinement.
The boundary $\policyP(0)$ is exactly $\policyG$ and is counted only once in
policy shares and winner margins.

The reference sensitivity analysis uses a 2,048-point Halton design over eleven
uncertain inputs. Two additional nested boxes contract and widen every range
while retaining the baseline and variation on both sides of the main policy
inequalities. They test dependence on the chosen bounds through deterministic
scope changes. A separate grid
varies $D_{SB}$ and $D_{DB}$ from zero to 0.75 years. It delays only the
strategic $S$ and $D$ states, leaving broad-benefit and opportunistic-misuse
flows at baseline timing. Nothing is sampled randomly, so every figure and row
is exactly reproducible.

\begin{table}[t]
\centering
\caption{Illustrative baseline calibration. Time is measured in years; welfare
units are normalized. Values are illustrative assumptions.}
\label{tab:calibration}
\small
\begin{tabular}{@{}lll@{}}
\toprule
Parameter & Baseline & Interpretation \\
\midrule
$\rho$ & 0.35 & discount rate \\
$\lambda_D$ & 0.70 & defender substitute-acquisition rate \\
$\lambda_S$ & 0.95 & sophisticated-adversary acquisition rate \\
$\mu$ & 2.50 & selected-defender deployment rate \\
$q_S,q_D$ & 1.10, 0.95 & direct capability uplifts \\
$\eta$ & 0.55 & defensive network productivity \\
$n_C,n_P,n_{\mathrm{open}}$ & 0.18, 0.48, 1.00 & defensive reach (same for both open tiers) \\
$m_O$ & 0.62 & opportunistic-misuse flow cost \\
$m_C$ & $0.04\,m_O$ & controlled-access residual misuse flow \\
$\delta,f$ & 0.62, 0.13 & safeguard deterrence and friction \\
$c^0$ & 0.05 & baseline capability gap \\
$\kappa_h$ & 1.00 & harm scale \\
$b_C,b_P,b_G,b_O$ & 0.08, 0.17, 0.34, 0.40 & benefit flows by policy \\
-- & 0.10 & pre-release delay cost per window year \\
$I_C$ & 0 & controlled-system tail cost \\
$I_G,I_O$ & 0.38, 0.52 & open-release irreversibility costs \\
$D_{SB},D_{DB}$ & 0, 0 & baseline open effective-use delays \\
$\bar\tau$ & 2.00 & maximum defender-first window \\
$\gamma$ & 1.60 & harm curvature \\
\bottomrule
\end{tabular}
\end{table}

\subsection{Computational results and parameter-box sensitivity}

At the illustrative baseline in \cref{tab:calibration} the optimized
1.72-year defender-first window has normalized welfare $-1.067$, ahead
of safeguarded open weights ($-1.112$), controlled access
($-1.459$), and minimally restricted weights ($-1.955$). Only the
ranking and the comparative statics carry meaning; the absolute
numbers are normalized.

\begin{figure}[H]
\centering
\includegraphics[width=\textwidth]{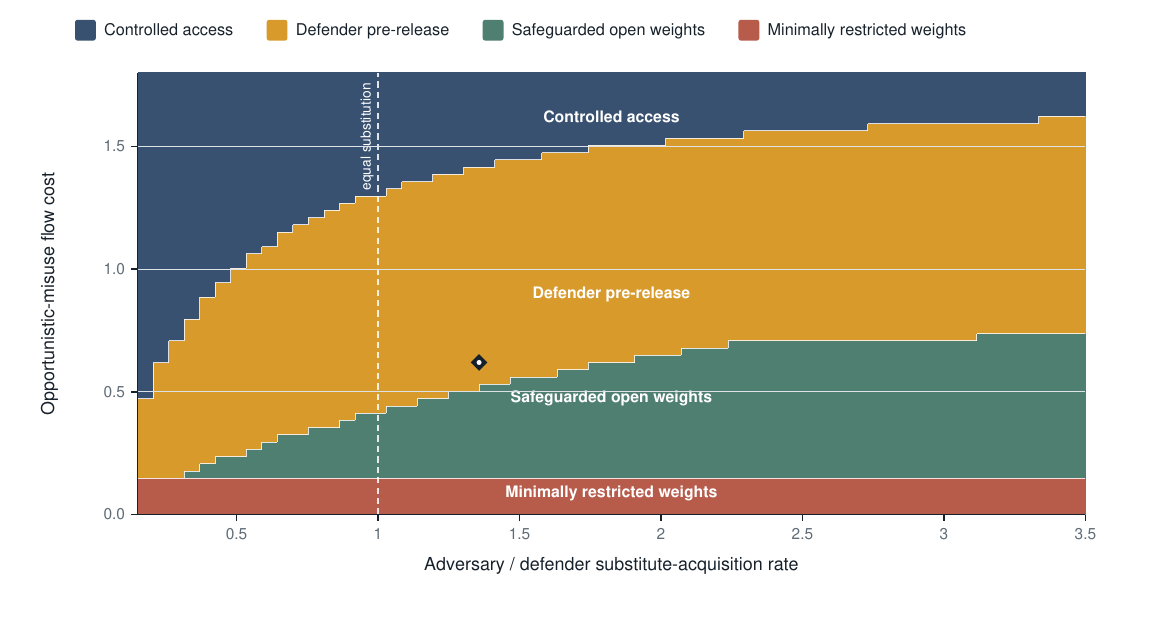}
\caption{Model-selected policy under the illustrative calibration as adversary
substitution and opportunistic misuse vary. All four policies occupy nonempty
regions. Restriction is
preferred when misuse is high and adversary substitution is slow; broader access
emerges as substitution accelerates or misuse falls. The black diamond is the
baseline and the dashed line marks equal adversary and defender substitution
rates. Region area describes parameter-space geometry within the stated box.}
\label{fig:phase}
\end{figure}

\Cref{fig:phase} sweeps $\lambda_S/\lambda_D$ across 0.15--3.5 and
opportunistic misuse across 0--1.8. Four regions emerge. When misuse
is very low, safeguards have too little to deter and their friction
plus their extra irreversibility make minimally restricted weights
preferable. When misuse is moderate and adversary substitution is
fast, safeguarded open release wins. Between them sits the
defender-first region: delay is still useful, but broad defensive
access matters enough to make eventual openness worth the transition
cost. In this slice, controlled access is selected when opportunistic misuse is
severe and adversary substitution is slow.

\begin{figure}[H]
\centering
\includegraphics[width=\textwidth]{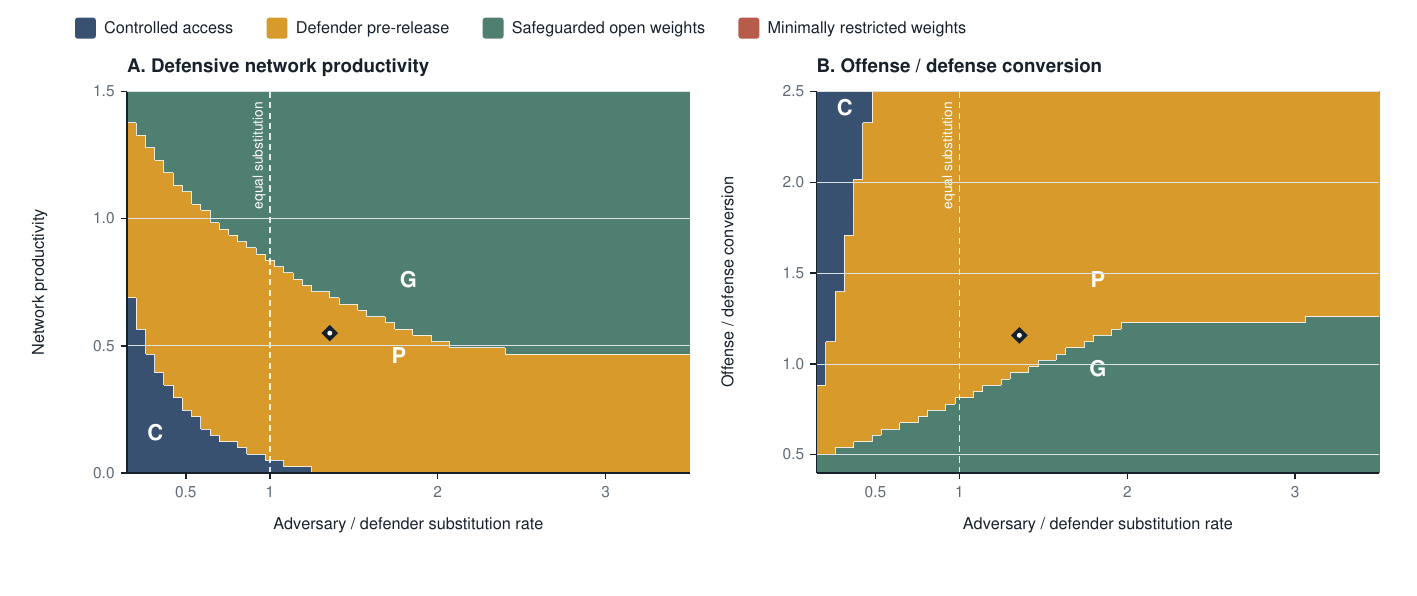}
\caption{Two policy-boundary checks at baseline misuse. Panel A varies
defensive network productivity; stronger returns to distributed participation
expand safeguarded open release, consistent with
Proposition~\ref{prop:externality}. Panel B varies direct offense--defense
conversion; offense-favoring conversion preserves sequencing over a larger
region. Letters denote the four policies listed in the legend.}
\label{fig:atlas}
\end{figure}

Faster adversary substitution alone is not a case for release.
Distributed participation has to produce real defensive spillovers,
and defenders have to convert access into effective capability at a
competitive rate. A selected pre-release group substitutes for broad
access only when its coverage and deployment look like the exposed
ecosystem.

\begin{figure}[H]
\centering
\includegraphics[width=\textwidth]{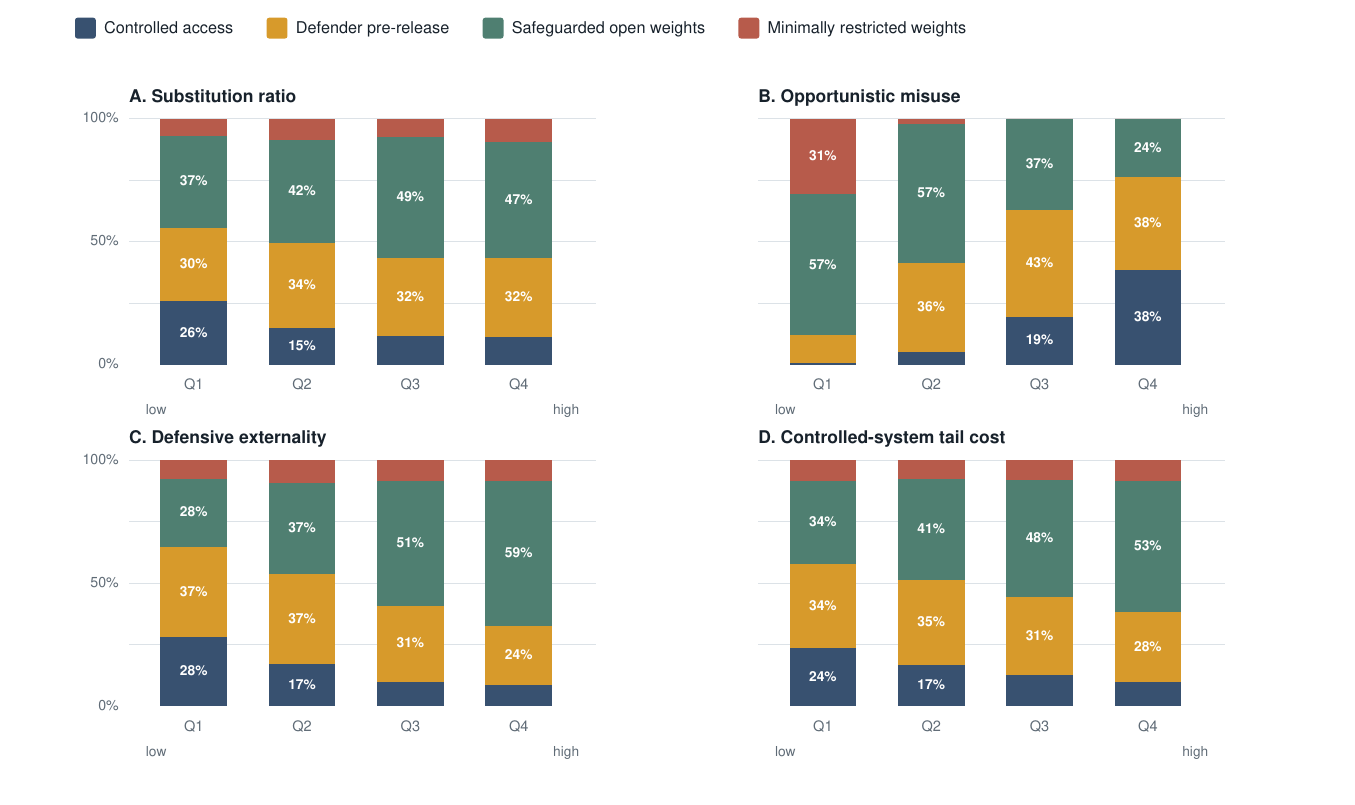}
\caption{Deterministic parameter-box scan over 2,048 low-discrepancy points. The
design jointly varies $\lambda_D\in[0.35,1.20]$,
$\lambda_S/\lambda_D\in[0.30,3.00]$, $m_O\in[0.10,1.40]$,
$\eta\in[0.10,1.20]$, $q_S/q_D\in[0.60,1.80]$,
$\mu\in[1,4]$, $\delta\in[0.30,0.85]$, $f\in[0.04,0.25]$,
$I_G\in[0.15,0.65]$, $I_C\in[0,0.65]$, and the pre-release delay cost in
$[0.03,0.20]$. The minimal-release irreversibility $I_O$ stays at its baseline
0.52, so the design includes points with $I_G>I_O$. Bars report winning-policy
shares within quartiles of four focal inputs.}
\label{fig:robustness}
\end{figure}

Within the 2,048 reference-box design points, safeguarded open weights are
welfare-maximizing at 43.8\%, defender pre-release at 32.1\%, controlled
access at 15.9\%, and minimally restricted weights at 8.2\%. These shares
describe the deterministic reference design. The
quartile-level shifts line up with the analytic comparative statics.
Controlled access falls from 25.7\% in the slowest
adversary-substitution quartile to 11.4\% in the fastest, but climbs
from 0.8\% in the lowest misuse quartile to 38.4\% in the highest.
Safeguarded open weights climb from 27.8\% to 59.0\% across
defensive-externality quartiles. Control drops from 23.8\% in the
lowest controlled-tail-cost quartile to 9.9\% in the highest.

The median winner--runner-up margin is 0.233 normalized welfare units; 6.2\%
of the points lie within 0.010. The scan maps policy boundaries within the
specified box.

\begin{figure}[H]
\centering
\includegraphics[width=\textwidth]{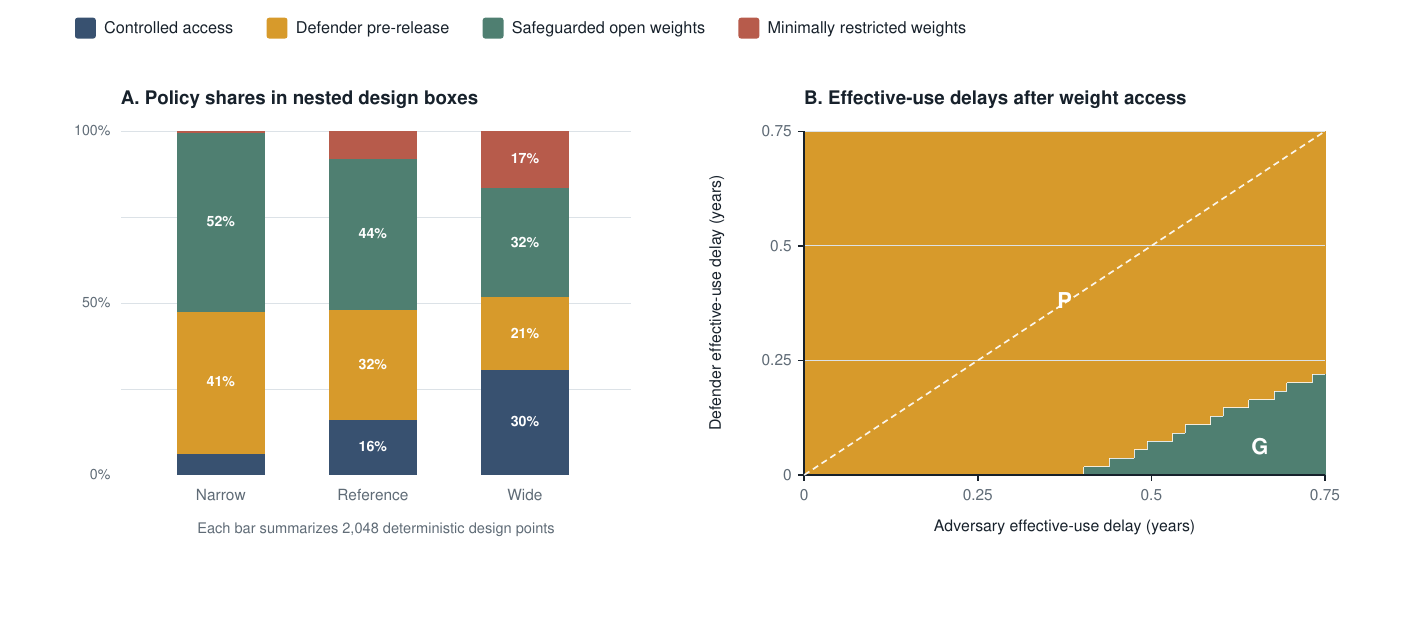}
\caption{Two scope checks beyond the reference design. Panel A repeats the
same 2,048-point low-discrepancy design over nested narrow, reference, and wide
parameter boxes. The shares change materially with the bounds, while each
policy remains welfare-maximizing somewhere across the three scopes. Panel B
introduces deterministic artifact-to-effective-use delays after broad release
at the baseline calibration. The diagonal marks equal adversary and defender
delays. Region areas and design shares describe the stated deterministic
parameter designs.}
\label{fig:robustness-extensions}
\end{figure}

The nested scopes make the dependence on analyst-chosen bounds visible.
Controlled access rises from 6.25\% of the narrow-box points to 30.4\% of the
wide-box points; minimally restricted weights rise from 0.6\% to 16.6\%.
Safeguarded open weights remain the largest single region in the narrow and
reference boxes. These are shares of deterministic parameter designs. The delay grid
shows why nominal publication is insufficient: broad release becomes less
attractive when defenders take longer than sophisticated adversaries to turn
the same artifact into effective capability. When the ordering reverses,
safeguarded open release regains a region even though artifact access is
simultaneous.

\section{Cybersecurity application and observed evidence}

Cybersecurity separates capability possession from deployed
protection more cleanly than most domains. Access to a model can
speed up vulnerability discovery or patch generation, but a patch is
only protection once it is validated, maintained, rolled out, and
adopted. Offense runs a parallel chain: target selection, access,
persistence, operational security. The rates $\lambda_i$ and the
conversion parameters have to refer to all of this, not to benchmark
performance alone.

Two July 2026 observations from AISI bear on the model directly.
Leading open-weight systems now perform comparably to closed systems
released four to seven months earlier, down from a six-to-ten-month
lag in internal evaluations run between January and September 2025.
On a 100-million-token cyber-range run, the report estimates
advertised-price costs of roughly \$85 for Opus 4.5/4.6, \$46 for
GLM-5.2, and \$1.19 for DeepSeek V4-Pro \cite{aisi2026cybergap}. These
observations indicate a narrower release-date capability gap and
substantially lower advertised use costs for some open-weight systems.

Gold Eagle provides an institutional example of the defensive conversion
channel. Announced by the White House on July 14, 2026, the
public-private clearinghouse is designed to receive, verify, prioritize, and route
AI-assisted vulnerability findings to government and critical-infrastructure
defenders \cite{whitehouse2026goldeagle}. The announcement reports that intake
and prioritization have begun but provides no remediation-time or
harm-reduction outcomes. The evidence bears on institutional reach and deployment capacity
without identifying realized values of the model's defensive parameters.

The Hugging Face case
(\cref{sec:hfincident,fig:hfincident}) captures effective access at
one moment. Provider APIs refused the real artifacts, and a
self-hosted model finished the job. Provider authorization bound the defensive
workflow, while local execution created a confidentiality boundary. Estimating
refusal frequencies, override times, or GLM~5.2's causal contribution would
require matched incident-response trials.

\begin{figure}[t]
\centering
\includegraphics[width=\textwidth]{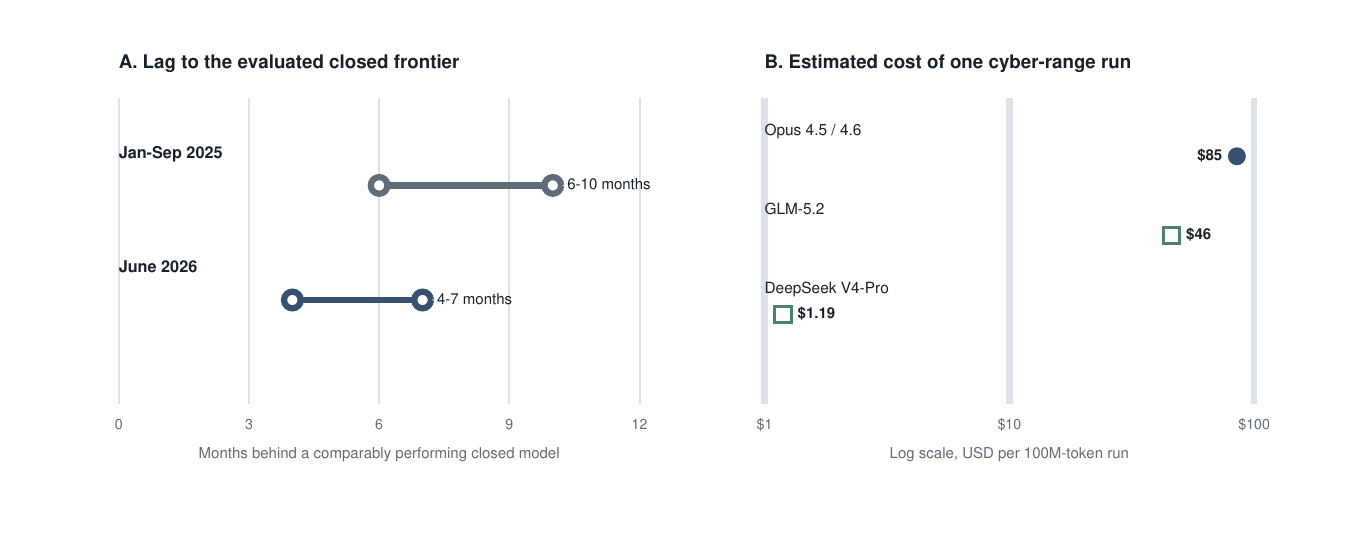}
\caption{Descriptive evidence from UK AISI's July 2026 cyber evaluations
\cite{aisi2026cybergap}. Panel A reproduces the reported release-date lag to a
comparably performing closed model. Panel B reproduces estimated costs for a
100-million-token cyber-range run at advertised first-party prices. These
observations sit upstream of actor-specific acquisition and realized attack
cost.}
\label{fig:cyberevidence}
\end{figure}

The AISI figures combine three evidence bases. The 2025 range comes from
internal evaluations without a public model-level breakdown. The
longer-horizon evidence covers only a handful of cyber ranges and
AISI itself calls it weaker than its 70-task narrow evaluation. The
cost figures use advertised first-party prices, although AISI did not run the
open-weight systems through first-party providers; realized compute costs may
differ. They price model use only, not compute procurement, operator labor,
target access, or remediation. The cost comparison is cross-sectional.

The evidence leaves $\lambda_S$ and $\lambda_D$ open. Release-date lag, actor
substitution time, and deployed attack or mitigation are distinct quantities.
The computational results therefore treat the welfare parameters as
illustrative inputs.

Six quantities the model asks for remain unmeasured:
\begin{enumerate}[leftmargin=*,itemsep=2pt]
  \item the time sophisticated actors need to obtain an adequate substitute
  through any channel;
  \item the time and cost for heterogeneous defenders to obtain and legally use
  an adequate substitute;
  \item the distribution from model access to validated, deployed protection;
  \item the mass and expected harm of attackers enabled only by low-friction
  public release;
  \item the incremental defensive output created by broader participation and
  the relative cost of converting capability into attack versus protection;
  and
  \item the probability that provider controls block legitimate defensive
  workloads, the time required to obtain an override, and whether sensitive
  incident artifacts can remain inside the defender's security boundary.
\end{enumerate}
Policy analysis starts after benchmark parity. A model can be available in
principle and unusable in practice for a small defender.
It can be capable of generating a patch that never ships. And a
removable refusal layer can leave a sophisticated actor untouched
while still binding opportunistic use.

NIST guidance treats misuse risk across the model lifecycle and
pushes for domain-specific evaluation and marginal risk
\cite{nist2025misuse}. An \emph{actor-delay and marginal-empowerment
audit} extends the same logic to release reviews. Before claiming
that restriction buys time or that release democratizes capability,
name the affected populations, list each one's substitute channels,
estimate the finite-horizon capability increment and the conversion
productivity, and state what protection can be deployed on what
schedule. The exercise forces the argument into a form where evidence
can push back.

\section{Private and social release incentives}
\label{sec:private}

The social and private rankings in \cref{eq:welfare} can diverge.
Controlled access protects subscription
revenue, user dependence, usage data, and downstream platform
control. Those private benefits survive whether or not restriction
creates any security window. When access inversion and distributed
defensive spillovers are externalities the laboratory does not
price, private incentives push toward more restriction than is
socially optimal.

The reverse failure occurs too. Open release can be privately
excessive when it drives ecosystem adoption, commoditizes a
competitor's product, attracts complements, or establishes a
technical standard, and misuse costs fall on third parties the
laboratory does not insure. Weak liability keeps those costs outside
the objective function on the release side just as it does on the
restriction side.

Provider incentives sort by architecture. A hosted-model provider
earns from controlled access. A platform company gains when its
model becomes a de facto standard. A cloud provider can profit under
either regime. A downstream enterprise may value portability more
than the model developer does. Governance should assess security
benefits, social access benefits, laboratory rents, and externalized
harms separately, rather than accept the laboratory's private ranking
as a proxy for the social one.

\section{Policy and managerial implications}
\label{sec:implications}

\paragraph{Actor-specific effects of restriction.}
The security value of restriction depends on the actor-specific discounted delays it creates, the
capability usefulness of each delay, and the coverage of admitted defenders.

\paragraph{Distributional access and aggregate risk.}
Release can narrow the capability gap between well-resourced and
constrained actors and raise the total dangerous capability at the
same time. Proposition~\ref{prop:empowerment} tells you which actor
picks up the largest marginal increment. That recipient may be an excluded
defender, an opportunistic attacker, or both.

\paragraph{Deployment-contingent windows.}
A defender-first release should be tied to concrete milestones:
tested patches, protected assets, open-source dependencies covered,
support for organizations that could not otherwise ship.
Proposition~\ref{prop:window} shows that sufficiently rapid adversary
substitution makes long windows untenable and drives every maximizing window
toward zero in the limit.

\paragraph{Default safeguards.}
Removable defaults will not stop a sophisticated actor. Evaluations
should estimate deterrence of the casual user and friction on the
legitimate user separately. A portfolio follows: defaults for
low-capability misuse, secure weight handling before release,
monitoring on hosted paths, and investment in distributed defensive
deployment.

\paragraph{Contingency access for high-consequence response.}
Critical-infrastructure operators and national incident responders should plan
for the possibility that a hosted API rejects live offensive artifacts. One
option is a pre-vetted self-hosted model kept behind the operator's own boundary,
not necessarily published. Another is a provider-authorized emergency mode for
identified responders with case-bound authorization, isolated
workspaces, tamper-evident logging, retrospective review, and
tested resistance to credential compromise. Both arrangements seek to preserve effective
defensive access while maintaining ordinary controls.

Industry conversations tend to collapse these decisions into one argument about
openness, although model developers, cloud and API providers, enterprise
adopters, and defenders each control different choices and require different
evidence, levers, and accountability.

\begin{table}[H]
\centering
\caption{Minimum evidence for an operational release or adoption decision.}
\label{tab:industry}
\small
\begin{tabularx}{\textwidth}{@{}p{0.20\textwidth}XX@{}}
\toprule
Decision owner & Evidence to collect & Available lever \\
\midrule
Model laboratory or release committee & Actor-specific substitutes, malicious
fine-tuning results, newly enabled users, safeguard friction, and tail scenarios
& Hosted access, selected pre-release, safeguarded weights, or no weight release \\
Cloud and API provider & Account evasion, abuse concentration, denial quality,
legitimate incident-response refusals, logging coverage, and migration to
substitute services & Identity controls, rate limits, monitoring, credentialed
emergency access, and staged capability access \\
Enterprise adopter & Total operating cost, data residency, latency,
customization, vendor-change risk, and internal security capacity & Hosted,
private-cloud, on-premises, open-weight, or mixed deployment \\
Security and open-source ecosystem & Time from access to validated patch,
maintainer coverage, rollout rate, and attacker spillovers & Recipient breadth,
embargo length, shared evaluations, patch funding, and deployment support \\
Regulator or standards body & Global substitute availability, concentration,
cross-border enforceability, incident data, and capability-triggered risk &
Evaluation and reporting duties, liability, security standards, or thresholds \\
\bottomrule
\end{tabularx}
\end{table}

A release memorandum should name the decision horizon, list the
affected actor classes, report each one's access probabilities under
both restriction and release, state the evidence behind each
conversion rate, and define what would reopen the decision. New
foreign weights, a material inference-cost fall, observed guardrail
removal, and failure to deploy promised mitigations should each
trigger a review before the next model generation.

\section{Limitations and research agenda}

The exponential acquisition model rules out common shocks. The effort
microfoundation keeps each actor's acquisition choice separable, so
actors do not contest scarce inputs, form coalitions, or learn from
each other's acquisition. Open release sets artifact acquisition to zero. The
baseline also sets effective-use delay to zero, while
\cref{eq:opendelay,fig:robustness-extensions} allows actor-specific deployment
delays; neither version represents stochastic integration or capacity queues.
Compute, expertise, language, licensing, and organizational complements can
therefore preserve additional inequality. Opportunistic
misuse collapses to a flow cost and irreversibility to a one-time
expectation. The laboratory does not choose recipients endogenously,
and defensive spillovers to attackers are not modeled.

The controlled-access specification omits several provider-side risks:
surveillance, discriminatory denial, censorship, sudden withdrawal,
correlated compromise, and political capture. Their effects enter
$b_C$ or $I_C$ as unobserved composites. The scan varies $I_C$ over
three illustrative ranges to show how these costs affect the policy ranking.

The five releases in \cref{fig:releasepaths} form a purposive set spanning
distinct release paths. Their compute requirements come from developer
reports. \Cref{fig:cyberevidence}
comes from one evaluator, one domain, and partly internal historical
results, leaving cross-domain and causal effects open. The
incident evidence rests on Hugging Face's public account and OpenAI's
preliminary attribution. Providers were unnamed, no matched forensic
comparison was reported, and refusal frequencies, override times, and
GLM~5.2's causal contribution were not estimated. These sources do not
identify the welfare parameters used in the computational calibration.

A competing-risks or Weibull diffusion model could separate leakage,
theft, independent development,
and foreign release into channels with distinct hazards. A network
model could represent heterogeneous software coverage and spillovers. A
coupled acquisition contest would let each actor's effort respond to
the other's and to the intensity of restriction. A mechanism-design
variant would choose recipients and disclosure requirements
together. A tail-risk extension would let rare losses depend on
persistent weight availability rather than on discounted flow harm.

Empirical progress requires prospective records of release decisions,
recipient coverage, time to deployed mitigation, API denial rates on defensive
workflows, substitute availability, and the cost of guardrail removal. Without
that record, a claimed defender head start cannot
be distinguished from privileged access that never became
protection.

\section{Conclusion}

Withholding improves safety only when it delays harmful capability more than
useful protection. When $\lambda_S>\lambda_D$, restriction creates a positive
discounted adversary access advantage. Holding usefulness equal, or under
\cref{eq:empowercondition}, immediate release adds more finite-horizon
capability to the slower-substituting group. These distributional results do not
determine welfare, which also depends on opportunistic misuse,
offense--defense conversion, defensive spillovers, safeguard friction, and
irreversibility.

A defender-first window that ends in publication is the
intermediate strategy, credible only when selected defenders can
convert early access into representative deployed protection.
Default safeguards raise the price of casual misuse without
constraining a sophisticated attacker. Their value is the reduction in casual
misuse net of the friction imposed on legitimate users.

National-security decisions come down to the same question: does
this restriction change effective access for the actors we are
worried about? When a state already holds an adequate unrestricted
substitute of its own, access restrictions on the focal domestic model do not remove
that capability and may leave defenders without comparable access. A
pre-authorized emergency mode or a tested self-hosted reserve can reduce, but
not eliminate, that exposure.

Release reviews should report actor-specific delays, horizon-specific capability
gains, the offense--defense exchange rate, and protection deployed before
substitute diffusion.

\section*{Acknowledgments}

This paper grew out of a conversation I had with my good friend, GPT-5.6 Sol
High.

\section*{Reproducibility statement}

All results and figures can be reproduced with the code and data at
\url{https://github.com/dcommey/asymmetric-proliferation}. The analysis uses no
proprietary data or live APIs.

\bibliographystyle{plainnat}
\bibliography{references}

\appendix
\section{Additional welfare slices}

\begin{figure}[H]
\centering
\includegraphics[width=\textwidth]{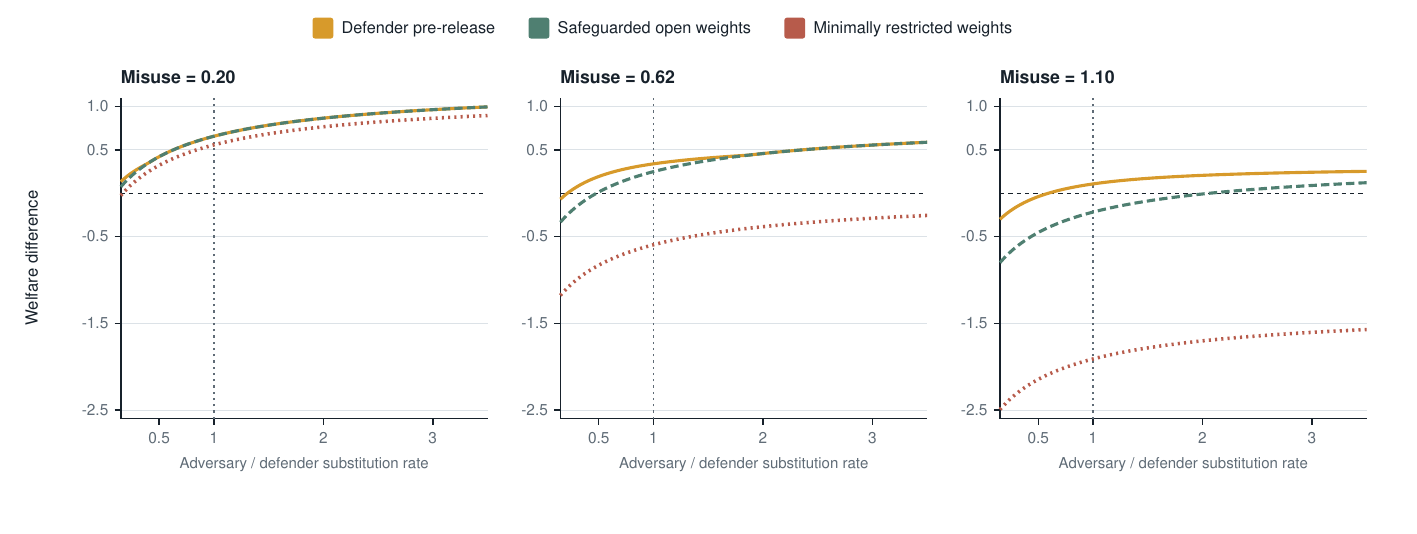}
\caption{Welfare difference relative to controlled access as adversary
substitution changes at three levels of opportunistic misuse. All panels share
the same vertical scale; the horizontal zero line is controlled access.}
\label{fig:slices}
\end{figure}

\clearpage
\section{Controlled-state occupancy derivation}

Independence implies
\begin{align*}
\Pr(X_S=0,X_D=0)&=e^{-(\lambda_S+\lambda_D)t},\\
\Pr(X_S=1,X_D=0)&=(1-e^{-\lambda_St})e^{-\lambda_Dt},\\
\Pr(X_S=0,X_D=1)&=e^{-\lambda_St}(1-e^{-\lambda_Dt}),\\
\Pr(X_S=1,X_D=1)&=(1-e^{-\lambda_St})(1-e^{-\lambda_Dt}).
\end{align*}
Multiplying by $e^{-\rho t}$ and integrating produces the four occupancies in
\cref{sec:computation}. Their sum is $1/\rho$, providing a direct implementation
check.

\section{Nested robustness-box definitions}

\Cref{fig:robustness-extensions} uses the ranges below. The narrow and wide
boxes are nested deterministic scope checks around the reference design. They
retain the baseline and variation on both sides of the principal policy
inequalities.

\begin{table}[H]
\centering
\caption{Bounds used in the three deterministic robustness designs.}
\label{tab:robustness-boxes}
\scriptsize
\begin{tabular}{@{}lccc@{}}
\toprule
Input & Narrow & Reference & Wide \\
\midrule
$\lambda_D$ & $[0.50,1.00]$ & $[0.35,1.20]$ & $[0.20,1.50]$ \\
$\lambda_S/\lambda_D$ & $[0.60,2.20]$ & $[0.30,3.00]$ & $[0.15,4.00]$ \\
$m_O$ & $[0.25,1.10]$ & $[0.10,1.40]$ & $[0.00,1.80]$ \\
$\eta$ & $[0.25,0.95]$ & $[0.10,1.20]$ & $[0.00,1.50]$ \\
$q_S/q_D$ & $[0.75,1.50]$ & $[0.60,1.80]$ & $[0.40,2.20]$ \\
$\mu$ & $[1.50,3.50]$ & $[1.00,4.00]$ & $[0.50,5.00]$ \\
$\delta$ & $[0.40,0.78]$ & $[0.30,0.85]$ & $[0.10,0.95]$ \\
$f$ & $[0.07,0.20]$ & $[0.04,0.25]$ & $[0.00,0.35]$ \\
$I_G$ & $[0.25,0.55]$ & $[0.15,0.65]$ & $[0.05,0.90]$ \\
$I_C$ & $[0.00,0.45]$ & $[0.00,0.65]$ & $[0.00,0.90]$ \\
Pre-release delay cost & $[0.05,0.16]$ & $[0.03,0.20]$ & $[0.00,0.30]$ \\
\bottomrule
\end{tabular}
\end{table}

\end{document}